\renewcommand{\d}{\mathrm{d}}
\newcommand{\E}{\mathbb{E}}
\DeclareMathOperator*{\plim}{\mathrm{plim}}
\newcommand*\bigcdot{\mathpalette\bigcdot@{.5}}
\newcommand*\bigcdot@[2]{\mathbin{\vcenter{\hbox{\scalebox{#2}{$\m@th#1\bullet$}}}}}
\numberwithin{equation}{section}
\numberwithin{figure}{section}
\theoremstyle{plain}
\newtheorem{theorem}{Theorem}[section]
\newtheorem{proposition}[theorem]{Proposition}
\newtheorem{corollary}[theorem]{Corollary}
\newtheorem{lemma}[theorem]{Lemma}
\theoremstyle{definition}
\newtheorem{definition}[theorem]{Definition}
\newtheorem{example}[theorem]{Example}
\title{A General Surplus Decomposition Principle in Life Insurance}
\author[1,$\star$]{Julian Jetses}
\author[1]{Marcus C.~Christiansen}
\affil[1]{\footnotesize Institut f{\"u}r Mathematik, Carl von Ossietzky Universit{\"a}t Oldenburg,  Carl-von-Ossietzky-Stra{\ss}e 9--11, DE-26129 Oldenburg, Germany.}
\affil[$\star$]{\footnotesize Corresponding author. E-mail: \href{mailto:Julian.Jetses@uni-oldenburg.de}{Julian.Jetses@uni-oldenburg.de}.}
\date{\today}
\begin{document}

\maketitle

\begin{abstract}
In with-profit life insurance,  the prudent valuation of future insurance liabilities leads to systematic surplus that mainly belongs to the policyholders and is redistributed as bonus. For a fair and lawful redistribution of surplus the insurer needs to decompose the total portfolio surplus with respect to  the contributions of individual policies and with respect to different risk sources. For this task, actuaries have a number of heuristic decomposition formulas, but an overarching decomposition principle is still missing. This paper fills that gap by introducing a so-called ISU decomposition principle that bases on infinitesimal sequential updates of the insurer's valuation basis. It is shown that the existing heuristic decomposition formulas can be replicated as ISU decompositions.  Furthermore, alternative decomposition principles and their relation to the ISU decomposition principle are discussed.
The generality of the ISU concept makes it a useful tool also beyond classical surplus decompositions in life insurance.
\end{abstract}

Keywords: with-profit life insurance; bonus and dividends; profit and loss attribution; redistribution of surplus; sequential decompositions

\section{Introduction}

The long term nature of life insurance contracts causes a significant trend risk that is non-diversifiable in the insurance portfolio. In with-profit life insurance the insurer uses a conservative trend scenario upfront and successively replaces it with the empirically observed trend. This updating of the valuation basis produces surplus.
Insurance regulation requires that this surplus is for the most part redistributed to the policyholders.
In each reporting period the insurer  determines at first the total profits and losses in the life insurance portfolio, which includes surplus from non-diversifiable as well as diversifiable risks.
In a second step the profits and losses are shared between the insurer and each of the policyholders subject to regulation. Surplus from diversifiable risk is commonly credited or debited to the insurer.
Surplus that arises from conservative trend scenario calculations mainly belongs to the policyholders.
In Germany the regulator moreover requires to distinguish between surplus due to demographic trends,  investment success  and  changes of administration costs. For such  splitting of the total surplus the insurer needs an additive and risks-based  decomposition method.

The life insurance literature knows various surplus decomposition formulas, cf.~Ramlau-Hansen (1988, 1991) and Norberg (1999) for continuous time modelling and Milbrodt \& Helbig (1999) for discrete time modelling.  All these formulas are derived in a heuristic manner. An overarching general decomposition principle is still missing, and this paper closes that gap. We introduce a so-called infinitesimal sequential updating (ISU) decomposition principle, which  can reproduce the existing decomposition formulas and puts them into a general and consistent framework. The ISU concept is an advancement of  sequential updating (SU) decomposition principles, which  are used in various fields of economics but have the disadvantage that they depend on a formal ordering of the different surplus sources, cf.~Fortin et al.~(2011) and Biewen (2014). The ISU concept overcomes this drawback of the SU concept by pushing the lengths of the reporting periods down to zero so that  the impact of the ordering vanishes. To our knowledge, this asymptotic approach is completely new in the literature.
An alternative to sequential decompositions are one-at-a-time (OAT) principles, which avoid the ordering problem but suffer from interaction effects between the parameters, undermining the desired additivity of surplus decompositions. We show that the asymptotic approach can help also here and find that  the resulting infinitesimal OAT decompositions are largely equivalent to ISU decompositions.

A recent trend in  insurance is to reward risk averse behaviour of the insured by means of individual activity tracking. The difference between the expected activity and the real activity of an insured leads to surplus, and advanced decomposition formulas are needed that can separate this activity surplus from the classical surplus sources. The ISU decomposition principle offers the necessary basic tools for solving that  problem, but a detailed study of these new insurance forms is beyond the scope of this paper and is left for future research.

The paper is structured as follows.
In section 2 we formally define the surplus decomposition problem. Section 3 describes the life insurance modelling framework and recalls the definition of the total surplus. In section 4 we introduce the ISU decomposition principle. Section 5 is rather technical and develops integral representations results for ISU decompositions, which are needed and applied in section 6, where we illustrate the ISU decomposition concept for typical life insurance applications.
 Section 7 discusses alternatives to the ISU decomposition concept and explains their relations. Section 8 briefly summarizes our findings.

\section{The surplus process of an individual insurance contract}

We generally assume that we have a complete probability space $(\Omega , \mathcal{A},\mathbb{P})$   with a right-continuous and complete filtration $\mathbb{F}=(\mathcal{F}_t)_{t \geq 0}$. We consider an individual insurance policy on a finite contract period $[0,T]$. For each $t \geq 0$ let $B(t) $ be the aggregated insurance cash flow  on $[0,t]$ between insurer and insured. We use the convention that premiums have a negative sign and benefits have a positive sign. Let  $\kappa$ be a semimartingale with $\kappa(0)=1$ that describes the value process of the insurer's self-financin investment portfolio. Then the value $A(t)$ of the assets accrued at time $t$ is given by
\begin{align}\label{DefA}
	A(t) =    - \int_{[0,t]} \frac{\kappa(t)}{\kappa(s)}  \d B(s),
\end{align}
assuming that $B$ is a finite variation semimartingale and that $\kappa$ is strictly positive.
In the hypothetical case that the  insurer knew the future, the liabilities at time $t$ would be likewise calculated as
\begin{align*}
	L^h(t) =  \int_{(t,T]} \frac{\kappa(t)}{\kappa(s)}  \d B(s).
\end{align*}
The difference between assets and liabilities is the surplus,
\begin{align}\label{ALformula}
	S^h(t)=  A(t) - L^h(t) = \kappa(t) ( A(0) - L^h(0)).
\end{align}
In this hypothetical setting,  the actual surplus emerges at time zero and any dynamics after zero just comes from the compounding factor $\kappa(t)$.  By defining
\begin{align*}
	\d \Phi(t) = \frac{\d \kappa(t)}{\kappa(t-)}
\end{align*}
as the return on investment of the insurers investment portfolio, the process $S^h$ satisfies
\begin{align*}
	\d S^h(t) = S^h(t-) \d \Phi(t)
\end{align*}
for  $t>0$, which shows again that the dynamics of $S^h$ on $(0,\infty)$ stems solely  from investment gains earned on the existing surplus. Since  $A(0)-L(0)$  depends on the future and is nowhere adapted to the available information,  in real life the insurer has to replace $A(0)-L(0)$  at each time $t$ by an $\mathcal{F}_t$-measurable proxy $R(t)$.  We denote $$R=(R_t)_{t \geq 0}$$ as the \emph{revaluation surplus process}  since it describes profits and losses that result from the continuous revaluation of $A(0)-L(0)$ as the information $\mathcal{F}_t$ increases with time $t$.
Now the \emph{total surplus process} is given by
	\begin{align}\label{totalSurplus}
  S(t) = \kappa(t) R(t), \quad t \geq 0,
\end{align}
and its dynamics is driven by both, the compounding factor $\kappa$ and the revaluation surplus process $R$.
The aim of this paper is to decompose $R$ with respect to a given set of risk sources.
We assume that  the life insurance model rests on a so-called \emph{risk basis}
$$X=(X_1 , \ldots, X_m),$$
  which is a multivariate adapted process composed of so-called \emph{risk factors} $X_1, \ldots, X_m$ such that $R$ is adapted to the right-continuous and  complete filtration generated by $X$. The information provided by $X$ at time $t$ can be represented by the stopped process $X^t$, formally defined by
\begin{align*}
X^t(s) = \mathds{1}_{s \leq t}\, X(s) + \mathds{1}_{s > t}\, X(t).
\end{align*}
So at each time $t $ the proxy $R(t)$ of $A(0)-L(0)$ can be interpreted as the value of a mapping
\begin{align*}
 (t,X^t) \mapsto R(t)
\end{align*}
that assigns at each time $t$ to the current information $X^t$ the  random variable $R(t)$.
In this paper we assume that there even exists a mapping $\varrho$ such that
\begin{align*}
 \varrho (X^t) =  R(t), \quad  t \geq 0.
\end{align*}
In the latter equation, the time parameter $t$ itself is not an argument of $\varrho$ and only appears as stopping parameter in $X^t$. That means that the dynamics of $R$ is solely driven by the  increase of information through $X^t$.

The central aim of this paper  is to decompose $R$ as
\begin{align}\label{aim:decomp}
  R(t) = R(0)+ D_1(t) + \cdots + D_m(t), \quad t \geq 0,
\end{align}
where $D_1, \ldots, D_m$ are adapted processes that start at zero and describe the contributions of each risk factor $X_1, \ldots, X_m$ to the dynamics of $R$. The first addend $R(0)$ represents initial surplus, which is not decomposed here.  Equation \eqref{aim:decomp} is equivalent to the additive decomposition
\begin{align}\label{totalSurplus2}
  S(t) = \kappa(t) S(0)+ \kappa(t) D_1(t) + \cdots + \kappa(t)D_m(t), \quad t \geq 0,
\end{align}
for the total surplus process. The first addend $\kappa(t) S(0)$ represents the time-$t$ value of the initial surplus $S(0)= R(0)$,  and the addends $\kappa(t) D_1(t), \ldots, \kappa(t) D_m(t)$ describe the time-$t$ values of the  contributions that the risk factors $X_1, \ldots, X_m$ make to the dynamics of $S$.
The additivity of the decompositions  \eqref{aim:decomp} and \eqref{totalSurplus2}
allows us to distribute the surplus among different parties.

The dynamics of the total surplus in \eqref{totalSurplus} is driven by investment gains on the surplus itself and by revaluation gains. In \eqref{totalSurplus2} the investment gains are subdivided among the different surplus contribution addends according to their shares in the total investment earnings.  It is not uncommon in the actuarial literature to collect all the investment gains in a separate term, see for example Norberg (1999, formula (5.3)). The idea is to apply It\^{o}'s product rule on $S(t)=\kappa(t) R(t)$ and then to identify each of the resulting addends either as investment gains or as revaluation gains. However, this approach mixes up the investment earnings of the carefully separated surplus contribution addends, so it is not helpful in our opinion and therefore it is not further considered in this paper.

\section{The revaluation surplus in multi-state models}\label{Section:LifeInsuranceModel}

Let the random pattern $Z$ of the insured be a right-continuous and adapted jump process on a finite state space $\mathcal{Z}$ with starting value $Z_0=a \in \mathcal{Z}$.
We define corresponding state processes $(I_j)_j$  and
counting processes  $(N_{jk})_{jk:j\neq k}$ by  $I_i(t)\coloneqq \mathds{1}_{\{Z(t)=i\}}$ and
\begin{align*}
     N_{jk}(t) = \sharp \{ s \in (0,t]: Z(s-)=j, Z(s)=k\}.
 \end{align*}
 Additionally, we define $N_{jj}\coloneqq -\sum_{k:k\neq j} N_{jk}$ and the vector-valued  process $N=(N_{jk})_{jk:j \neq k}$.

We call a pair $(\overline{\Phi}, \overline{\Lambda})$ a valuation basis if the following properties hold:
\begin{itemize}
\item  $\overline{\Phi}$ is semimartingale with $\overline{\Phi}(0)=0$ and $\Delta \overline{\Phi}(t)>-1$ for all $t> 0$,
\item $\overline{\Lambda}=(\overline{\Lambda}_{jk})_{jk: j\neq k}$ is a  vector-valued, right-continuous finite variation process with $\overline{\Lambda}(0)=0$,
\item the processes $\overline{\Lambda}_{jk}$, $j\neq k$ are non-decreasing and $\sum_{k:k \neq j} \Delta  \overline{\Lambda}_{jk}(t)\leq 1$ for every $t>0$ and every $j$.
\end{itemize}

The process $\overline{\Phi}$ represents cumulative returns on investment, and the solution
$\overline{\kappa}=(\overline{\kappa}(t))_{t\geq 0}$   of the stochastic differential equation
\begin{align}\label{SDEPhi}
&\d \overline{\kappa}(t)=\overline{\kappa}(t-)\d \overline{\Phi}(t), \quad \overline{\kappa}(0)=1,
\end{align}
is the value process of a self-financing investment portfolio with respect to $\overline{\Phi}$. 

Furthermore, given the valuation basis $(\overline{\Phi}, \overline{\Lambda})$, let $\overline{p}=(\overline{p}(s,t))_{0\leq s\leq t}$ with $\overline{p}(s,t)=(\overline{p}_{jk}(s,t))_{jk}$ denote the solution of the stochastic differential equation system
\begin{align}\label{kolmforward}
\overline{p}_{jk}(s,\d t)=\sum\limits_{i} \overline{p}_{ji}(s,t-)\d \overline{\Lambda}_{ik}(t), \ \overline{p}_{jk}(s,s)=\delta_{jk}, \ t>s.
\end{align}
Observe that  we may pick $N$ itself for $\overline{\Lambda}$. In this case the solution of \eqref{kolmforward} satisfies $p_{aj}(0,t)=I_j(t)$, since $I_j(0)=
\delta_{aj}$ and
\begin{align}
\d I_j(t)=\sum\limits_{k:k\neq j} (\d N_{kj}(t)-\d N_{jk}(t))=\sum\limits_{k} I_k(t-)\d N_{kj}(t).
\end{align}
\indent Throughout this paper, let the valuation basis $(\Phi,\Lambda)$ represent the so-called \emph{second order valuation basis}. The process $\Phi$ describes the real return in ivestment in the insurer's investment portfolio.  Let $\kappa$ denote the solution of \eqref{SDEPhi} with respect to $\Phi$. 
For the second-order basis we additionally assume that
\begin{itemize}
\item $\Lambda$ is a predictable process,
\item conditional on $(\Phi,\Lambda)=(E,F)$ the process $Z$ is a Markov process under $\mathbb{P}$ with cumulative transitions intensity matrix  $F$.
\end{itemize}
So the process $I_j(t-) \d \Lambda_{jk}(t)$ is a $\mathbb{P}$-compensator of $\d N_{jk}$ with respect to the natural completed filtration of the random vector $(Z^t,\Phi,\Lambda)_{t\geq 0}$. Due to the conditional Markov property, the stochastic differential equation (\ref{kolmforward})  with respect to $\Lambda$ corresponds to the Kolmogorov forward equation of $Z$  conditional on $(\Phi,\Lambda)$, and its solution  $p(s,t)=(p_{jk}(s,t))_{jk}$ is the transition probability matrix of $Z$ conditional on $(\Phi,\Lambda)$.

Furthermore, let the valuation basis $(\Phi^*,\Lambda^*)$  represent  the so-called \emph{first order valuation basis}.
For this specific  valuation basis we additionally assume  that
\begin{itemize}
\item $\Phi^*$ and $\Lambda^*$ are deterministic,
\item $Z$ is a Markov process under a prudent probability measure $\mathbb{P}^*$ with cumulative transition intensities $\Lambda^*_{jk}$, $j \neq k$,
\item $(\mathbb{I}+\Delta \Lambda^*_M(t))^{-1}$ exists for every $t>0$,
\end{itemize}
where $\Lambda^*_M$ denotes the matrix-valued process  $\Lambda^*_M = (\Lambda^*_{jk})_{jk}$ with $\Lambda^*_{jj}\coloneqq -\sum_{k:k\neq j} \Lambda^*_{jk}$.
Let  $\kappa^*$ and $p^*$ be the solutions of \eqref{SDEPhi} and \eqref{kolmforward} with respect to $\Phi^*$ and $\Lambda^*$, respectively.  Under the first order valuation,   \eqref{kolmforward} is the classical Kolmogorov forward equation and $p^*$ is the classical transition probability matrix of $Z$ under $\mathbb{P}^*$. The existence of $(\mathbb{I}+\Delta \Lambda^*_M(t))^{-1}$ for every $t>0$ ensures that the matrix $p^*(s,t)$ has an inverse for each $s\leq t$, denoted as $q^*(s,t)$, cf.~Lemma \ref{itoinverse} in the appendix. In particular, $q^*$ satisfies the stochastic differential equation
\begin{align*}
q^*(s,\d t)=-(\d G(t))q^*(s,t-),\ q^*(s,s)=\mathbb{I},\ t>s,
\end{align*}
where $G(t)=\Lambda^*_M(t)-\sum_{0<s\leq t} (\Delta \Lambda^*_M(s))^2(\mathbb{I}+\Delta \Lambda^*_M(s))^{-1}$ (cf. Lemma \ref{itoinverse}).

Recall that the insurance policy shall have a finite contract horizon in $[0,T]$.
We assume that the insurance cash flow $B$ has the form
\begin{align}\label{DefOfBindiv}
  \d B (t) =  \sum_j I_j(t-)\, \d B_j(t) + \sum_{jk:j \neq k} b_{jk}(t)\, \d N_{jk}(t),
\end{align}
where $(B_j)_j$ are right-continuous finite variation functions  that satisfy $\d B_j(t)=0$ for $t>T$,  and $(b_{jk})_{jk:j \neq k}$ are bounded and measurable functions with  $b_{jk}(t)=0 $ for $t>T$.
\newpage 
We generally assume that
\begin{itemize}
\item the processes $\Phi^*$, $\Phi$ and $(N, \Lambda^*,\Lambda,(B_j)_j)$ have no simultaneous jumps.
\end{itemize}
The latter condition implies that the covariation between the investment risk and all other risk drivers is zero. This fact will help us to build additive decompositions by applying It\^{o}'s formula, cf.~Lemma \ref{lemmasurplusdecomp} below.

\subsection*{Individual revaluation surplus}

In with-profit life insurance, the remaining future liabilities of the individual insurance contract  at time $t$ are commonly evaluated as
\begin{align*}
 \sum\limits_{j} I_j(t)V_j^*(t),
\end{align*}
where $V_j^*(t)$ shall be the prospective reserve at time $t$ in state $j$ with respect to the first order valuation basis, cf.~Norberg (1999). According to  Milbrodt \& Helbig (1999, chapter 10.A) it holds that
\begin{align*}
V_j^*(t)=\sum\limits_{k} \int_{(t,T]} \frac{\kappa^*(t)}{\kappa^*(s)}p^*_{jk}(t,s-)\d B_k(s)+\sum\limits_{k,l: k\neq l} \int_{(t,T]}  \frac{\kappa^*(t)}{\kappa^*(s)} p^*_{jk}(t,s-)b_{kl}(s)d\Lambda^*_{kl}(s).
\end{align*}
The accrued assets of the individual insurance contract at time $t$ equal \eqref{DefA}, so the total surplus of the individual policy at time $t$ is
\begin{align}\label{DefSind}
S(t)=-\int_{[0,t]} \frac{\kappa(t)}{\kappa(s)}\d B(s)-\sum\limits_{j} I_j(t)V_j^*(t),
\end{align}
cf.~Norberg (1999). 
The corresponding revaluation process $R$ equals
\begin{align}\label{DefRind}
R(t)= \frac{S(t)}{\kappa(t)}= -\int_{[0,t]} \frac{1}{\kappa(s)}\d B(s)-\sum\limits_{j} \frac{1}{\kappa(t)} I_j(t)V_j^*(t).
\end{align}

\begin{proposition}\label{propositionfunctionalindsurplus} For $R$ defined by \eqref{DefRind} and  $t \in [0,T]$  it holds that
\begin{align}\label{rhoindsurplus}
R(t)=-H((\Phi^*, \Lambda^*)+  (\Phi-\Phi^*,N-\Lambda^*)^t),
\end{align}
where for any valuation basis $(\overline{\Phi}, \overline{\Lambda})$ the mapping $H$ is defined by
\begin{align}\label{functionaldef}
H((\overline{\Phi}, \overline{\Lambda}))\coloneqq \sum\limits_{j} \int_{[0,T]} \frac{1}{\overline{\kappa}(s)}\overline{p}_{aj}(0,s-)\d B_j(s)+\sum\limits_{j,k:j\neq k} \int_{(0,T]} \frac{1}{\overline{\kappa}(s)} \overline{p}_{aj}(0,s-)b_{jk}(s)d\overline{\Lambda}_{jk}(s)
\end{align}
with $\overline{p}_{aj}(0,0-)\coloneqq \delta_{aj}$. %
\end{proposition}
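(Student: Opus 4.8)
The plan is to show that the two ingredients of $R(t)$ in \eqref{DefRind}, namely the accrued-asset integral $-\int_{[0,t]}\kappa(s)^{-1}\,\d B(s)$ and the discounted reserve term $\sum_j \kappa(t)^{-1}I_j(t)V_j^*(t)$, are precisely the partial sums of the integrals appearing in $H$ when the valuation basis is taken to be the ``spliced'' basis $(\overline{\Phi},\overline{\Lambda}) = (\Phi^*,\Lambda^*) + (\Phi-\Phi^*, N-\Lambda^*)^t$. The key observation is that this spliced basis equals $(\Phi, N)$ on $[0,t]$ and equals $(\Phi^*,\Lambda^*)$ on $(t,T]$, with the appropriate ``freezing'' at time $t$ coming from the stopping operation; so the associated $\overline{\kappa}$ equals $\kappa$ on $[0,t]$ and then grows like $\kappa^*$ afterwards (rescaled), and the associated transition matrix $\overline{p}_{aj}(0,\cdot)$ equals $I_j(\cdot)$ on $[0,t]$ (by the computation already recorded in the excerpt, $\d I_j = \sum_k I_k(t-)\,\d N_{kj}$) and then propagates via $p^*$ afterwards.

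\textbf{Steps.} First I would make the splitting of the basis precise: writing $\overline{\Phi} = \Phi^* + (\Phi-\Phi^*)^t$ and $\overline{\Lambda} = \Lambda^* + (N-\Lambda^*)^t$, I would check directly from the definition of $X^t$ that $\d\overline{\Phi}(s) = \d\Phi(s)$ and $\d\overline{\Lambda}(s) = \d N(s)$ for $s \le t$ while $\d\overline{\Phi}(s) = \d\Phi^*(s)$ and $\d\overline{\Lambda}(s) = \d\Lambda^*(s)$ for $s > t$ (the no-simultaneous-jumps assumption guarantees there is no pathology in how these pieces are glued at $s=t$, and one should verify $(\overline{\Phi},\overline{\Lambda})$ is still an admissible valuation basis). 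Second, solving \eqref{SDEPhi} piecewise gives $\overline{\kappa}(s)=\kappa(s)$ for $s\le t$ and $\overline{\kappa}(s) = \kappa(t)\,\kappa^*(s)/\kappa^*(t)$ for $s\ge t$; solving \eqref{kolmforward} piecewise, together with the remark in the excerpt identifying the solution for $\overline{\Lambda}=N$ as $\overline{p}_{aj}(0,s)=I_j(s)$, gives $\overline{p}_{aj}(0,s)=I_j(s)$ for $s\le t$ and $\overline{p}_{aj}(0,s) = \sum_i I_i(t)\,p^*_{ij}(t,s)$ for $s\ge t$ (using the Chapman--Kolmogorov/flow property of the forward equation and continuity of $\overline{p}$ across $s=t$ up to the jump structure). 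Third, I would substitute these into \eqref{functionaldef}, split each integral as $\int_{[0,t]} + \int_{(t,T]}$, and check: on $[0,t]$ the two families of integrals reassemble, via \eqref{DefOfBindiv}, into $\int_{[0,t]}\kappa(s)^{-1}\,\d B(s)$; on $(t,T]$ the two families reassemble, after factoring $\kappa(t)^{-1}$ and recognising the formula for $V^*_j$ quoted from Milbrodt \& Helbig, into $\sum_j \kappa(t)^{-1} I_j(t) V_j^*(t)$. Adding the two blocks and negating yields exactly \eqref{DefRind}, which is \eqref{rhoindsurplus}.

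\textbf{Main obstacle.} The routine parts are the piecewise ODE solutions; the delicate part is the bookkeeping at the splicing time $s = t$ — in particular making sure the ``$s-$'' evaluations in \eqref{functionaldef} (i.e. $\overline{p}_{aj}(0,s-)$ and the half-open integration domains $[0,t]$ versus $(t,T]$) are handled so that the jump of $B$ at $s=t$, the jump of $Z$ at $s=t$, and the convention $\overline{p}_{aj}(0,0-)\coloneqq\delta_{aj}$ are all allocated to the correct block, and that no mass is double-counted or lost. Here the standing assumption that $\Phi^*,\Phi$ and $(N,\Lambda^*,\Lambda,(B_j)_j)$ have no simultaneous jumps is what makes the compounding factor behave cleanly across the seam. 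Once the left-limit conventions are pinned down, matching the $(t,T]$-block with the Milbrodt--Helbig expression for $V_j^*(t)$ is a direct term-by-term comparison.
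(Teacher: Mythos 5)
Your proposal is correct and follows essentially the same route as the paper: identify the spliced basis's $\overline{\kappa}$ and $\overline{p}_{a\cdot}(0,\cdot)$ as $\kappa$ and $I_\cdot$ on $[0,t]$ and as the rescaled first-order quantities $\kappa(t)\kappa^*(\cdot)/\kappa^*(t)$ and $\sum_k I_k(t)p^*_{k\cdot}(t,\cdot)$ on $(t,T]$, then substitute into \eqref{functionaldef} and match the two blocks against \eqref{DefOfBindiv} and the Milbrodt--Helbig formula for $V_j^*$. Your extra care about the seam at $s=t$ and the left-limit conventions is sound (though the gluing needs no appeal to the no-simultaneous-jumps assumption, since the stopped increments of $(\Phi-\Phi^*,N-\Lambda^*)^t$ vanish identically on $(t,T]$ by construction).
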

\begin{proof} The solution of (\ref{kolmforward}) with respect to the cumulative transition intensity vector $\Lambda^* + (N-\Lambda^*)^t$ is
\begin{align*}
\begin{cases} I_j(s), \ s\leq t, \\ \sum\limits_{k} I_k(t)p^*_{kj}(t,s), \ t>s, \end{cases}
\end{align*}
where $p^*_{kj}(t,s)$ is the solution of (\ref{kolmforward}) with respect to the first order valuation basis. The solution of (\ref{SDEPhi}) with respect to $\Phi^*+(\Phi-\Phi^*)^t$ is
\begin{align*}
\begin{cases}
\kappa(s), \ s\leq t, \\ \kappa(t)\frac{\kappa^*(s)}{\kappa^*(t)},\ s>t,
\end{cases}
\end{align*}
where $\kappa^*$ is the solution of (\ref{SDEPhi}) with respect to the first order valuation basis. By plugging these solutions into (\ref{functionaldef}), we obtain the desired result.
\end{proof}
Proposition \ref{propositionfunctionalindsurplus} allows us to represent $R$ by
\begin{align*}
  R(t) = \varrho (X^t) , \quad t \geq 0,
\end{align*}
 for  various  choices of $X$ and  $\varrho$.  For example, we may  define the risk basis $X$ and the mapping $\varrho$ as  follows:
\begin{example}\label{ExpRX1}
  By setting
  $$X=(X_{\Phi},X_u,X_s)=(\Phi-\Phi^*,N-\Lambda,\Lambda-\Lambda^*),$$
   we distinguish between financial risk, unsystematic biometric risk and systematic biometric risk,   and we may define $\varrho$ by     $$\varrho(X^t) = -H\big((\Phi^*,\Lambda^*)+ (X_{\Phi}^t,X_u^t+X_s^t) \big).$$
\end{example}
\begin{example}\label{ExpRX2}
By setting $$X=(X_{\Phi},(X_{jk})_{jk: j\neq k} )= (\Phi-\Phi^*,(N_{jk}-\Lambda^*_{jk})_{jk: j\neq k}),$$ we distinguish between financial risk and transition-wise biometric risks,  and we may define $\varrho$ by
   $$     \varrho(X^t) = -H\big((\Phi^*,\Lambda^*)+ (X_{\Phi}^t, (X_{jk}^t)_{jk:j\neq k}) \big).$$
\end{example}
\begin{example} \label{ExpRX3} Let the processes $(\Phi_j)_j$ and $(\Phi^*_j)_j$ be defined by $ \d \Phi_j (t) = I_j(t-) \d\Phi(t)$, $\Phi_j(0)=0$, and $\d \Phi^*_j (t) = I_j(t-) \d\Phi^*(t)$, $\Phi^*_j(0)=0$, respectively. Further, we denote $\Lambda_j=(\Lambda_{jk})_{k:k \neq j}$ and $\Lambda^*_j=(\Lambda^*_{jk})_{k:k \neq j}$.
By setting $$X=(X_u,(X_j)_j) =(X_u, (X_{j,1},X_{j,2})_j) =(N-\Lambda,  (\Phi_j-\Phi^*_j,  \Lambda_{j}-\Lambda^*_{j})_j)$$ we distinguish  between unsystematic biometric risk and state-wise remaining risks,  and we may define  $\varrho$ by $$     \varrho(X^t) = -H\left((\Phi^*,\Lambda^*_{jk})+ (0,X_u^t)+ \left(\sum_j X_{j,1}^t,(X_{j,2}^t)_j\right)\right).$$
\end{example}

\subsection*{Mean portfolio revaluation surplus}
In actuarial practice it is not uncommon to focus on mean portfolio values only. We can replicate this perspective by applying the expectation $\E[ \,\cdot\, | \Phi, \Lambda]$ on the individual values.
 The resulting mean portfolio revaluation surplus is
\begin{align}\label{DefRcoll}
 R'(t) = \E[ R(t) | \Phi, \Lambda] ,
\end{align}
and the corresponding mean portfolio total surplus  is
\begin{align}\label{DefScoll}
 S'(t) = \E[ S(t) | \Phi, \Lambda]= \kappa(t) R'(t).
\end{align}
Note that Norberg (1999) uses the definition $S'(t) = \E[ S(t) | \Phi^t, \Lambda^t] $ instead, but this definition is equivalent since
\begin{align*}
&\E[ S(t) | \Phi, \Lambda] \\
&= - \int_{[0,t]} \frac{\kappa(t)}{\kappa(s)} \sum_j \bigg( p_{aj}(0,s-) \d B_j(s) + \sum_{k:k \neq j} b_{jk}(s) p_{aj}(0,s-) \d\Lambda_{jk}(s)  \bigg)-\sum\limits_{j} p_{aj}(0,t)V_j^*(t)
\end{align*}
is $\sigma( \Phi^t,\Lambda^t)$-measurable.
The following corollary is a direct consequence of Proposition \ref{propositionfunctionalindsurplus}.
\begin{corollary}\label{propositionfunctionalcollsurplus}
 For $R'$ defined by \eqref{DefRcoll} and  $t \in [0,T]$  it holds that
\begin{align}\label{rhomeansurplus}
R'(t)= \E\big[ - H((\Phi^*,\Lambda^*) + (\Phi-\Phi^*, N-\Lambda^*)^t)\big| \Phi, \Lambda\big].
\end{align}
\end{corollary}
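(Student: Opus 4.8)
The plan is to combine the defining equation \eqref{DefRcoll} with the functional representation established in Proposition \ref{propositionfunctionalindsurplus}. Concretely, I would start from $R'(t) = \E[R(t)\mid \Phi,\Lambda]$ and substitute the almost-sure identity $R(t) = -H((\Phi^*,\Lambda^*) + (\Phi-\Phi^*,N-\Lambda^*)^t)$ from \eqref{rhoindsurplus} directly into the conditional expectation; this yields \eqref{rhomeansurplus} at once by linearity of conditional expectation. No further manipulation is required.

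The only point that deserves a brief check is that the conditional expectation is well defined, i.e.\ that $R(t)$ is integrable. This follows from the standing assumptions: the state space $\mathcal{Z}$ is finite, the horizon $T$ is finite, the functions $B_j$ are of finite variation on $[0,T]$ and the $b_{jk}$ are bounded, while $\kappa$ is a strictly positive semimartingale, so each term in \eqref{DefRind} (equivalently, each summand of $H$ in \eqref{functionaldef}) is dominated by an integrable random variable. Hence $R(t)\in L^1$ and $\E[R(t)\mid\Phi,\Lambda]$ exists.

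There is essentially no obstacle here: the statement is a direct corollary and the proof is a one-line substitution. If one wished to make the measurability structure fully explicit, one could additionally record --- as already done in the remark preceding the corollary --- that the integrand on the right-hand side is $\sigma(\Phi^t,\Lambda^t)$-measurable, so that conditioning on $(\Phi,\Lambda)$ and on $(\Phi^t,\Lambda^t)$ give the same object; but this observation is not needed to derive the identity \eqref{rhomeansurplus} itself.
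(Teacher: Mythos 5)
Your proposal matches the paper's argument exactly: the paper presents the corollary as an immediate consequence of Proposition \ref{propositionfunctionalindsurplus}, obtained by substituting the almost-sure identity \eqref{rhoindsurplus} into the definition \eqref{DefRcoll}. Your additional integrability check is a harmless (and sensible) supplement, but the route is the same.
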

Because of the latter corollary, in the examples \ref{ExpRX1} to \ref{ExpRX3} we just need to add the
conditional expectation $\E[ \,\cdot\, | \Phi, \Lambda]$ to the definition of $\varrho$ in order to get to the  mean portfolio perspective.  The next example is in particular relevant in German life insurance.
\begin{example}\label{Example:RinGermany}
Consider a life insurance contract with the states active, surrendered and dead, $$\mathcal{Z}=\{a,s,d\}.$$
We assume that $\Lambda^*$ and $\Lambda$ are absolutely continuous with densities $\lambda^*$ and $ \lambda$, respectively. Let
\begin{align*}
  \phantom{.}_{k-l}p^*_{x+l}=p^*_{aa}(l,k), \qquad   q^*_{x+k-1}=p^*_{ad}(k-1,k), \qquad r^*_{x+k-1}=p^*_{as}(k-1,k).
\end{align*}
We assume that sojourn payments occur only in state active and only as lump sum payments $b_k$ at integer times $k$. Furthermore, we assume  that the death benefit function  and the surrender benefit function  have the form
 \begin{align*}
 b_{ad}(t) = \frac{\kappa(\lfloor t \rfloor)}{\kappa(t)} d_{\lceil t \rceil},\qquad b_{as}(t) = \frac{\kappa(\lfloor t \rfloor)}{\kappa(t)} s_{\lceil t \rceil},
 \end{align*}
 where  $d_{\lceil t \rceil}$ and $s_{\lceil t \rceil}$ represent  the death benefit and surrender benefit in year $\lfloor t \rfloor$. This definition of $b_{ad}$ and $b_{as}$ discounts death benefits and surrender benefits as if they are paid out at the end of the year, so that $V^*_a$ has at  integer times $l$ the representation
  \begin{align*}
     V_a^*(l)= \sum_{k=l+1}^{T} \frac{\kappa^*(l)}{\kappa^*(k)} \,\phantom{.}_{k-l}p^*_{x+l} \,b_k  + \sum_{k=l+1}^T \frac{\kappa^*(l)}{\kappa^*(k)} \phantom{.}_{k-l-1}p^*_{x+l}  \big(d_k\,  q^*_{x+k-1}   + s_k\, r^*_{x+k-1}    \big).
  \end{align*}
We  define yearly interest rates of first order and second order by
\begin{align*}
   i^*_k =  e^{\int_k^{k+1} \phi^*(u)\, \d u }-1 , \qquad  i_k =  e^{\int_k^{k+1} \phi(u)\, \d u }-1 , \qquad k \in \mathbb{N}_0.
\end{align*}
One can show that the  yearly increments of the mean portfolio revaluation surplus process equal
\begin{align*}
 &R'(k+1) - R'(k)\\
 &=  \mathrm{e}^{-\int_0^{k+1} \phi(u) \, \d u}\phantom{.}_{k}p_{x}   \Big( V_a^*(k)\, (1+i_k)  - q_{x+k}\, d_{k+1}    - r_{x+k} \, s_{k+1} - p_{x+k}\,\big( b_{k+1}+V^*_a(k+1)\big)\Big) .
\end{align*}
This formula is commonly used in German life insurance, cf.~Milbrodt \& Helbig (1999, section 11.B). It is  common in Germany to decompose the increments $R'(k+1) - R'(k)$ into investment surplus, mortality surplus and lapse surplus. For that purpose, analogously to Example \ref{ExpRX2} we choose $$X=(X_{\Phi},X_{ad},X_{as})=(\Phi-\Phi^*,N_{ad}-\Lambda^*_{ad},N_{as}-\Lambda^*_{as})$$
as risk basis.
\end{example}

\section{The ISU decomposition principle}

Recall that the $t$-stopped process $X^{t}=(X^t_1, \ldots, X_m^t)$  represents the currently available information on the risk factors $X_1, \ldots, X_m$  at time $t$.
Suppose that the  information updates of the risk factors $X_1, \ldots, X_m$ are asynchronously delayed  with $t_1, \ldots , t_m  \leq t$ being the current update statuses of each risk factor.
Then
\begin{align*}
U(t_1, \ldots, t_m):=    \varrho((X^{t_1}_1, \ldots, X^{t_m}_m) )
\end{align*}
is the value of the delayed revaluation process at time $t$. %
 We denote $U=( U(t_1, \ldots, t_m))_{ t_1, \ldots , t_m  \geq 0}$
as the \emph{revaluation surplus surface} with respect to $X$.
We can recover the revaluation surplus process  $R$ from the revaluation surplus surface $U$ as
\begin{align*}
  R(t) = U(t, \ldots, t), \quad  t\geq 0.
\end{align*}
For any  partition $\mathcal{T}(t) = \{0=t_0 < t_1 < \cdots < t_k =t \}$ of the interval $[0,t]$ we
  can build the telescoping series
\begin{align*}
 R(t)-R(0) &= U(t, \ldots, t) -U(0, \ldots,0)\\
  &= \sum_{l=0}^{k-1} \Big( U(t_{l+1},t_{l}, \ldots,t_{l})-U(t_{l},t_{l}, \ldots,t_{l})\Big) \\
   &\quad +  \sum_{l=0}^{k-1} \Big( U(t_{l+1},t_{l+1}, t_{l},\ldots,t_{l})-U(t_{l+1},t_{l}, \ldots,t_{l})\Big) \\
  &\quad + \cdots \\
  & \quad + \sum_{l=0}^{k-1}  \Big( U(t_{l+1},\ldots, t_{l+1}, t_{l+1})-U(t_{l+1},\ldots, t_{l+1}, t_{l})\Big) .
 \end{align*}
 It is natural here to interpret the $m$ different sums on the right hand side as an additive  decomposition $R(t)-R(0)=D_1(t)+  \cdots + D_m(t)$,   since the $i$-th sum collects exactly the  information updates for the $i$-th risk factor.

\begin{definition}
The  random vector   $D(t)=(D_1(t), \ldots, D_m(t))$  defined by
\begin{align}\label{SUdecomposition}\begin{split}
  D_1(t) &= \sum_{l=0}^{k-1} \Big( U(t_{l+1},t_{l}, \ldots,t_{l})-U(t_{l},t_{l}, \ldots,t_{l})\Big), \\
  & \cdots  \\
    D_{m}(t) &= \sum_{l=0}^{k-1 }  \Big( U(t_{l+1},\ldots, t_{l+1}, t_{l+1})-U(t_{l+1},\ldots, t_{l+1}, t_{l})\Big),
 \end{split}\end{align}
  is called  the \emph{SU (sequential updating) decomposition}  of $R(t)-R(0)$ with respect to $\mathcal{T}(t)$.
\end{definition}
The SU decomposition principle is used in various fields of economics, see for example Fortin et al.~(2011) and Biewen (2014). In the definition formula \eqref{SUdecomposition} we update  the information on  $X$  in a specific order, starting with risk factor $X_1$, then updating $X_2$, and so on.  Unfortunately, the decomposition is  not invariant with respect to this update order, which is  a major drawback of the SU concept.  We can reduce the impact of the update order by increasing the number of updating steps, i.e.~refining the partition $\mathcal{T}_n(t)$. In a next step we push such refinements to the limit.

 Let  $\mathcal{T}_n(t)=\{ 0=t^n_0 < t^n_1 < \cdots  < t^n_{k_n} =t \}$, $n \in \mathbb{N}$, be a sequence of
 partitions of $[0,t]$  with vanishing step lengths (i.e.~$  \lim_{ n\rightarrow \infty} \max_{1 \leq l  \leq k_{n}} |t^n_{l}-t^n_{l-1}| =0$).
 For each $n \in \mathbb{N}$ let  $D^n(t)=(D_1^{n}(t), \ldots ,D_{m}^{n}(t))$ be the SU decomposition of $R(t)-R(0)$ with respect to $\mathcal{T}_n(t)$. We are looking for a random vector  $D(t)$ that satisfies
\begin{align}\label{ISUdecomposition}\begin{split}
 D_i(t) = \plim_{n \rightarrow \infty} D^{n}_i(t), \quad i \in \{0, \ldots, m\}.
 \end{split}\end{align}
\begin{definition}
 Let $(\mathcal{T}_n(t))_{n \in \mathbb{N}}$ be a sequence of partitions of $[0,t]$ with vanishing step lengths.
 If  $D(t)$ satisfies \eqref{ISUdecomposition}, then we call $D(t)$ the \emph{ISU (infinitesimal sequential updating) decomposition}  of $R(t)-R(0)$ with respect to $(\mathcal{T}_n(t))_{n \in \mathbb{N}}$.
\end{definition}

\section{ISU decompositions in multi-state life insurance}

This section contains general technical results that will be needed for the examples in the next section. The proofs can be found in the appendix. For any valuation basis $(\overline{\Phi},\overline{\Lambda})$, we write 
\begin{align*}
\widetilde{\overline{\Phi}}(t)&=\overline{\Phi}(t)-[\overline{\Phi},\overline{\Phi}]^c(t)-\sum_{0<s\leq t} (1+\Delta \overline{\Phi}(s))^{-1} (\Delta \overline{\Phi}(s))^2.
\end{align*}

Moreover, let $R^*_{jk}$, $j\neq k$ denote the first order sum at risk, i.e.
\begin{align*}
R^*_{jk}(t)=b_{jk}(t)+V_k^*(t)-V_j^*(t).
\end{align*}

\begin{lemma}\label{lemmasurplusdecomp} Let $(\overline{\Phi}, \overline{\Lambda})$ be a valuation basis such that $(\Phi^*,\overline{\Phi})$ and $(\Lambda^*,\overline{\Lambda},(B_j)_j)$ have no simultaneous  jumps. Then it holds that
\begin{align*}
H\big((\Phi^*,\Lambda^*)+(\overline{\Phi}-\Phi^*, \overline{\Lambda}-\Lambda^*)^t\big)
&=\int_{(0,t]}\frac{1}{\overline{\kappa}(s-)}\sum\limits_{j} \overline{p}_{aj}(0,s-) V^*_j(s-)\d (\widetilde{\overline{\Phi}}-\Phi^*+[\widetilde{\overline{\Phi}},\Phi^*])(s) \\
& \quad -  \sum_{jk: j \neq k}\int_{(0,t]} \frac{1}{\overline{\kappa}(s)} \overline{p}_{aj}(0,s-)R^*_{jk}(s)\d (\overline{\Lambda}_{jk}-\Lambda^*_{jk})(s).
\end{align*}
\end{lemma}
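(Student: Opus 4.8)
The plan is to evaluate the left-hand side explicitly by the substitution technique of the proof of Proposition~\ref{propositionfunctionalindsurplus}, and then to differentiate the result in $t$ and read off the right-hand side.

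\emph{Step 1 (reduction to a reserve plus a cash-flow term).} Put $\overline{\Phi}^{[t]}:=\Phi^*+(\overline{\Phi}-\Phi^*)^t$ and $\overline{\Lambda}^{[t]}:=\Lambda^*+(\overline{\Lambda}-\Lambda^*)^t$. As in Proposition~\ref{propositionfunctionalindsurplus}, the solution of \eqref{SDEPhi} with respect to $\overline{\Phi}^{[t]}$ equals $\overline{\kappa}(s)$ for $s\le t$ and $\overline{\kappa}(t)\kappa^*(s)/\kappa^*(t)$ for $s>t$, and the solution of \eqref{kolmforward} with respect to $\overline{\Lambda}^{[t]}$ equals $\overline{p}_{aj}(0,s)$ for $s\le t$ and $\sum_k\overline{p}_{ak}(0,t)\,p^*_{kj}(t,s)$ for $s>t$. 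Inserting these into \eqref{functionaldef} and splitting every integral at $t$: the part over $[0,t]$ reproduces the integrand of \eqref{functionaldef} for the basis $(\overline{\Phi},\overline{\Lambda})$ restricted to the domains $[0,t]$ and $(0,t]$ — call this term $M(t)$ — while the part over $(t,T]$ equals, by the integral representation of $V^*_j$ recalled above, $\overline{\kappa}(t)^{-1}\sum_j\overline{p}_{aj}(0,t)V^*_j(t)$. Hence
\[
H\big((\Phi^*,\Lambda^*)+(\overline{\Phi}-\Phi^*,\overline{\Lambda}-\Lambda^*)^t\big)=M(t)+\frac1{\overline{\kappa}(t)}\sum_j\overline{p}_{aj}(0,t)V^*_j(t).
\]

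\emph{Step 2 (differentiation).} Differentiate the right-hand side in $t$ by the It\^o product rule, using: (i) the inverse-stochastic-exponential identity $\d(\overline{\kappa}^{-1})(t)=-\overline{\kappa}(t-)^{-1}\d\widetilde{\overline{\Phi}}(t)$, which is why $\widetilde{\overline{\Phi}}$ enters; (ii) the forward equation \eqref{kolmforward}, $\d\overline{p}_{aj}(0,t)=\sum_i\overline{p}_{ai}(0,t-)\d\overline{\Lambda}_{ij}(t)$, together with $\overline{\Lambda}_{jj}=-\sum_{k\ne j}\overline{\Lambda}_{jk}$; and (iii) Thiele's backward differential equation for $V^*_j$, which I would obtain by expressing $V^*_j(t)$ through the inverse transition matrix $q^*(0,t)=p^*(0,t)^{-1}$ and differentiating with the help of $\d q^*(0,t)=-(\d G(t))q^*(0,t-)$ (Lemma~\ref{itoinverse}):
\[
\d V^*_j(t)=V^*_j(t-)\,\d\Phi^*(t)-\d B_j(t)-\sum_{k:k\ne j}R^*_{jk}(t)\,\d\Lambda^*_{jk}(t)+(\text{jump corrections}),
\]
the corrections being built from $\Delta\Phi^*$ and $(\mathbb I+\Delta\Lambda^*_M)^{-1}$.

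\emph{Step 3 (collecting terms).} The no-common-jumps hypothesis on $(\Phi^*,\overline{\Phi})$ versus $(\Lambda^*,\overline{\Lambda},(B_j)_j)$ kills every mixed quadratic covariation except $[\widetilde{\overline{\Phi}},\Phi^*]$, and in particular kills the jump corrections of (iii). The $\d B_j$ and $b_{jk}\,\d\overline{\Lambda}_{jk}$ contributions from $\d M(t)$ then cancel against those produced by Thiele's equation and the forward equation. The surviving $\d\Phi^*$, $\d\widetilde{\overline{\Phi}}$ and $\d[\widetilde{\overline{\Phi}},\Phi^*]$ terms combine — this is exactly the identity $\kappa^*/\overline{\kappa}=\mathcal E\big(-(\widetilde{\overline{\Phi}}-\Phi^*+[\widetilde{\overline{\Phi}},\Phi^*])\big)$ — into the first integrand $\overline{\kappa}(s-)^{-1}\sum_j\overline{p}_{aj}(0,s-)V^*_j(s-)\,\d(\widetilde{\overline{\Phi}}-\Phi^*+[\widetilde{\overline{\Phi}},\Phi^*])(s)$, and the remaining $\d\Lambda^*_{jk}$, $\d\overline{\Lambda}_{jk}$ terms combine, using $R^*_{jk}=b_{jk}+V^*_k-V^*_j$, into $-\sum_{jk:j\ne k}\overline{\kappa}(s)^{-1}\overline{p}_{aj}(0,s-)R^*_{jk}(s)\,\d(\overline{\Lambda}_{jk}-\Lambda^*_{jk})(s)$. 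Integrating from $0$ to $t$, and noting that both sides vanish at $t=0$, finishes the proof.

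\emph{Main obstacle.} The genuinely technical point is Step~2(iii): Thiele's equation at this level of generality, where $\Phi^*$ and $\Lambda^*$ may have jumps so that the backward Kolmogorov equation involves the matrix inverse $q^*$ and the correction process $G$. Keeping exact track of these jump corrections, and of the single surviving covariation $[\widetilde{\overline{\Phi}},\Phi^*]$, so that the right-hand side comes out in precisely the stated form, is the part that requires care.
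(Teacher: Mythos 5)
Your proposal is, in substance, the paper's own proof: the same split of $H$ into an accumulated cash-flow term $M(t)$ plus the prospective-reserve term $\overline{\kappa}(t)^{-1}\sum_j\overline{p}_{aj}(0,t)V^*_j(t)$ (exactly the first display of the paper's argument, up to sign), followed by It\^o's product rule, the inverse stochastic exponential for $1/\overline{\kappa}$, and the inverse transition matrix $q^*$ from Lemma \ref{itoinverse} to handle the $t$-dependence of the reserve. The only organizational difference is that you first package the $q^*$-differentiation into a Thiele equation for $V^*_j$ and then differentiate $\overline{\kappa}^{-1}\overline{p}_{aj}V^*_j$ componentwise, whereas the paper keeps the reserve in the factored matrix form $\frac{\kappa^*}{\overline{\kappa}}\,\overline{p}(0,\cdot)q^*(0,\cdot)\,Y$ and differentiates that product directly via Lemma \ref{itoinverse2}; this is a cosmetic reshuffling, not a different route.

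One claim in your Step 3 is not correct as stated: the hypothesis does \emph{not} kill every mixed covariation except $[\widetilde{\overline{\Phi}},\Phi^*]$. It only separates the pair $(\Phi^*,\overline{\Phi})$ from the tuple $(\Lambda^*,\overline{\Lambda},(B_j)_j)$, so $\overline{\Lambda}$ and $\Lambda^*$ (or the $B_j$) may well jump simultaneously, and then $[\overline{p}_{aj}(0,\cdot),V^*_j]\neq 0$. That covariation does not vanish; it is absorbed into the right-limit evaluations in the final formula (this is precisely why the second integrand carries $\overline{\kappa}(s)^{-1}$ and $R^*_{jk}(s)=b_{jk}(s)+V^*_k(s)-V^*_j(s)$ at $s$, while the first carries $\overline{\kappa}(s-)^{-1}$ and $V^*_j(s-)$). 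Your final assembled formula has the correct limit placements, so the end result is right, but if you literally discard $[\overline{p}_{aj}(0,\cdot),V^*_j]$ you would obtain $V^*(s-)$ in the $\overline{\Lambda}$-integrand and the identity would fail at common jump times of $\overline{\Lambda}$ and $\Lambda^*$. Apart from this bookkeeping point (which you yourself flag as the delicate part), the plan is sound.
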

\begin{theorem}\label{Theorem:ISU1}  Let the processes $(\Phi_j)_j$ and $(\Phi^*_j)_j$ be defined by $ \d \Phi_j (t) = I_j(t-) \d\Phi(t)$, $\Phi_j(0)=0$, and $\d \Phi^*_j (t) = I_j(t-) \d\Phi^*(t)$, $\Phi^*_j(0)=0$, respectively.  For $j,k \in \mathcal{Z}$ let
\begin{align*}
   X_{\Phi,j}(t) &=\Phi_j-\Phi^*_j,\\
   X_{u,jk}(t) &= N_{jk}-\Lambda_{jk},\\
  X_{s,jk}(t) &= \Lambda_{jk}-\Lambda^*_{jk},
\end{align*}
and set $X=((X_{\Phi,j})_j,(X_{u,jk})_{j,k:j \neq k},(X_{s,jk})_{j,k:j\neq k})$. Then
\begin{align*}
  \varrho (X^t)= -H\left((\Phi^*,\Lambda^*)+\left( \sum_j X_{\Phi,j} , ( X_{u,jk} +X_{s,jk} )_{jk: j\neq k}\right)^t\right)
\end{align*}
has the ISU decomposition
  \begin{align*}
D_{\Phi,j}(t)&=\int_{(0,t]}\frac{1}{\kappa(s-)} I_j(s-) V_j^*(s-)\d (\widetilde{\Phi}-\Phi^*)(s), \\
D_{u,jk}(t)&= -  \int_{(0,t]} \frac{1}{\kappa(s)} I_j(s-)R_{jk}^*(s)\d (N_{jk} -\Lambda_{jk} )(s),\\
D_{s,jk}(t) &= -  \int_{(0,t]} \frac{1}{\kappa(s)} I_j(s-)R_{jk}^*(s)\d (\Lambda_{jk}-\Lambda_{jk}^*)(s).
\end{align*}
In particular, the ISU decomposition does not depend on the update order.
\end{theorem}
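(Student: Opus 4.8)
The plan is to obtain the ISU decomposition as the limit of the SU decompositions. Fix a sequence of partitions $(\mathcal{T}_n(t))_n$ of $[0,t]$ with vanishing mesh and, for each $n$, write the SU decomposition $D^n(t)$ of $\varrho(X^t)-\varrho(X^0)$ according to \eqref{SUdecomposition}, with the risk factors in the canonical order $(X_{\Phi,j})_j,(X_{u,jk})_{jk},(X_{s,jk})_{jk}$. A generic summand of the $i$-th block is the difference of the revaluation surface $U$ at two multi-times that agree in every coordinate except the one currently being updated, which moves from $t^n_l$ to $t^n_{l+1}$. By the definition of $\varrho$ this equals, up to sign, $H$ evaluated at two valuation bases that coincide off $(t^n_l,t^n_{l+1}]$ and, on that interval, differ only through the increment carried by the single updated factor, namely $I_j\,d(\Phi-\Phi^*)$ for a financial factor $X_{\Phi,j}$, $d(N_{jk}-\Lambda_{jk})$ for $X_{u,jk}$, and $d(\Lambda_{jk}-\Lambda^*_{jk})$ for $X_{s,jk}$.

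First I would establish the single-step representation: a refinement of Lemma~\ref{lemmasurplusdecomp} to valuation bases whose components are stopped at different times. Applied to such a pair of bases it expresses each single-step difference as an integral over $(t^n_l,t^n_{l+1}]$ whose integrator is exactly the increment of the updated factor and whose integrand is built from $1/\overline\kappa$, the transition matrices $\overline p_{aj}(0,\cdot-)$, the reserves $V^*_j$ and the sums at risk $R^*_{jk}$, all evaluated along the current asynchronously-delayed (``hybrid'') basis. For the financial factors one must also track the covariation correction $[\widetilde{\overline\Phi},\Phi^*]$ of Lemma~\ref{lemmasurplusdecomp}; here $\Phi^*$ is deterministic, so the continuous covariation vanishes and only joint jumps of $\Phi^*$ with the delayed financial drivers can contribute, which a careful accounting shows do not survive the limit (and vanish identically when, as in most applications, $\Phi^*$ is continuous). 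Summing over $l$ turns each block of $D^n(t)$ into a Riemann--Stieltjes / stochastic sum approximating the corresponding claimed integral, the only discrepancy being that the hybrid basis appears in the integrand in place of the fully updated one.

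Next I would let $n\to\infty$. As the mesh vanishes, on each subinterval the hybrid basis converges pathwise (by right-continuity of all processes) to the fully updated basis $(\Phi,N)$, for which $\overline\kappa=\kappa$, $\overline p_{aj}(0,s-)=I_j(s-)$ and $\widetilde{\overline\Phi}=\widetilde\Phi$; hence the integrands converge pointwise to $\tfrac{1}{\kappa(s-)}I_j(s-)V^*_j(s-)$, $-\tfrac{1}{\kappa(s)}I_j(s-)R^*_{jk}(s)$ and $-\tfrac{1}{\kappa(s)}I_j(s-)R^*_{jk}(s)$, respectively. Since $V^*$, $R^*$ and the $b_{jk}$ are bounded and $\kappa,\kappa^*$ are bounded away from zero on $[0,T]$, the integrands are uniformly bounded, so ordinary dominated convergence handles the finite-variation integrators ($\Lambda_{jk}-\Lambda^*_{jk}$ and the finite-variation part of $\widetilde\Phi-\Phi^*$), while the dominated convergence theorem for stochastic integrals handles the martingale integrators ($N_{jk}-\Lambda_{jk}$ and the local-martingale part of $\widetilde\Phi-\Phi^*$), giving convergence in probability. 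This yields $\plim_n D^n_{\Phi,j}(t)=D_{\Phi,j}(t)$, $\plim_n D^n_{u,jk}(t)=D_{u,jk}(t)$ and $\plim_n D^n_{s,jk}(t)=D_{s,jk}(t)$, i.e.\ the asserted ISU decomposition; the additivity $R(t)-R(0)=\sum D_\bullet(t)$ is inherited from the telescoping identity.

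Finally, since none of the limiting integrals $D_{\Phi,j},D_{u,jk},D_{s,jk}$ makes any reference to the order in which the risk factors are updated, repeating the argument with an arbitrary permutation of the factors produces the same limit; hence the ISU decomposition is independent of the update order. The main obstacle is the single-step representation of Step~2: one must control the effect of updating a single component while the remaining factors sit in an arbitrary asynchronous configuration, identify which hybrid basis drives $\overline\kappa$ and $\overline p$, and verify that the second-order (covariation and jump) corrections genuinely disappear in the infinitesimal limit. Once that representation is in hand, the passage to the limit is routine.
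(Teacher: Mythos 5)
Your overall strategy is the same as the paper's: telescope the SU sums, represent each single update step via Lemma~\ref{lemmasurplusdecomp} applied to asynchronously stopped (``hybrid'') valuation bases, let the mesh vanish so that the hybrid integrands converge pointwise to the fully updated ones, and close with the dominated convergence theorem for stochastic integrals; order-independence then follows because the limit integrals do not reference the ordering. So the architecture is right, and you correctly locate the crux in the single-step representation.

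There is, however, one genuine inaccuracy in how you state that crux, and it is precisely the point the paper's proof spends most of its effort on. You assert that a single-step difference is ``an integral over $(t^n_l,t^n_{l+1}]$ whose integrator is exactly the increment of the updated factor.'' That is false: updating, say, $X_{\Phi,j_0}$ changes the discount factor $\overline\kappa$ (and updating an $X_{u,jk}$ or $X_{s,jk}$ changes $\overline p$) inside the integrands of \emph{all previously updated} factors, so the step difference $\varrho^{t_l,t_{l+1}}_{J\cup\{i_0\}}-\varrho^{t_l,t_{l+1}}_{J}$ also contains first-order integrals against $\d(\widetilde\Phi-\Phi^*)$, $\d(N_{jk}-\Lambda_{jk})$ and $\d(\Lambda_{jk}-\Lambda^*_{jk})$ for every factor already in $J$. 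These are not ``second-order covariation and jump corrections'' as you suggest in your closing paragraph; they are genuine cross-terms whose integrands involve differences such as $\kappa^{J\cup\{j_0\}}(\alpha_n(s))/\kappa^{J\cup\{j_0\}}(s-)-\kappa^{J}(\alpha_n(s))/\kappa^{J}(s-)$ or $p^{\overline J}_{gj}(\alpha_n(s),s-)-p^{J}_{gj}(\alpha_n(s),s-)$, which tend to zero only because the two hybrid bases agree up to $\alpha_n(s)$ and the interval shrinks. The paper isolates exactly these terms (the processes $\xi_{\Phi,j,n}$ and $\xi_{us,jk,n}$ indexed by the already-updated sets $J_\Phi$, $J_u\cup J_s$), proves their pointwise a.s.\ convergence to zero, and kills them with the dominated convergence theorem for stochastic integrals; it also needs the no-simultaneous-jumps assumption to replace limits like $\frac{1+\Delta\Phi(s)}{1+\Delta\Phi^{J_\Phi}(s)}$ by $1$ under the relevant integrators. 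Until you write down and control these cross-terms explicitly, the passage from the single-step representation to the claimed Riemann-type sums is not justified, so this is the part of your argument that still needs to be supplied rather than merely flagged as an obstacle.
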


\begin{lemma}\label{ISU:ExpectationConsistency}
Let $X=(X_1, \ldots , X_m)$ be a given risk basis with
$$R(t) = \varrho ((X_1, \ldots ,X_m)^t)$$ for a suitable mapping $\varrho$, generating the ISU decomposition $D(t)=(D_1(t), \ldots, D_m(t))$ with respect to $(\mathcal{T}_n(t))_n$, and let $\mathcal{G}$ be a sub-$\sigma$-algebra of $ \mathcal{A}$.
Suppose that the SU decomposition $D^n(t)=(D^n_1(t),\ldots,D_m^n(t))$ of $R(t)-R(0)$ with respect to $\mathcal{T}_n(t)$ satisfies $|D^n_i(t)|\leq Y$, $i=1,\ldots,m$, $n\in\mathbb{N}$, for some integrable random variable $Y$.
Then the ISU decomposition of
 $$\widetilde{R}(t)=\widetilde{\varrho}((X_1, \ldots ,X_m)^t)\coloneqq \mathbb{E}\left[\varrho((X_1, \ldots ,X_m)^t)|\mathcal{G}\right]$$
 is given by $$\widetilde{D}(t)=(\mathbb{E}[D_1(t)|\mathcal{G}], \ldots, \mathbb{E}[D_m(t)|\mathcal{G}]).$$
\end{lemma}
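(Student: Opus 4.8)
The plan is to exploit the linearity of conditional expectation, which passes through the finite telescoping sums defining the SU decompositions, and then to upgrade the hypothesised convergence in probability to convergence in $L^1$ so that $\E[\,\cdot\mid\mathcal{G}]$ may be interchanged with the $\plim$.

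First I would identify the revaluation surplus surface of $\widetilde{R}$. Writing $U=(U(t_1,\dots,t_m))$ for the surface associated with $(\varrho,X)$, the stated definition of $\widetilde{\varrho}$ gives, for all update statuses $t_1,\dots,t_m\ge 0$,
\begin{align*}
\widetilde{U}(t_1,\dots,t_m)=\widetilde{\varrho}\big((X_1^{t_1},\dots,X_m^{t_m})\big)=\E\big[\varrho\big((X_1^{t_1},\dots,X_m^{t_m})\big)\,\big|\,\mathcal{G}\big]=\E\big[U(t_1,\dots,t_m)\,\big|\,\mathcal{G}\big].
\end{align*}
Consequently, for every partition $\mathcal{T}_n(t)=\{0=t_0^n<\dots<t^n_{k_n}=t\}$, each increment appearing in the telescoping series for $\widetilde{R}(t)-\widetilde{R}(0)$ is the $\mathcal{G}$-conditional expectation of the corresponding increment for $R(t)-R(0)$. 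Since the SU decomposition $\widetilde{D}^n(t)$ of $\widetilde{R}(t)-\widetilde{R}(0)$ with respect to $\mathcal{T}_n(t)$ is a finite sum of such increments, linearity of the conditional expectation yields
\begin{align*}
\widetilde{D}^n_i(t)=\E\big[D^n_i(t)\,\big|\,\mathcal{G}\big],\qquad i=1,\dots,m,\ n\in\mathbb{N}.
\end{align*}

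It then remains to pass to the limit $n\to\infty$. By hypothesis $D^n_i(t)\to D_i(t)$ in probability with $|D^n_i(t)|\le Y$ for an integrable $Y$; passing to an almost surely convergent subsequence shows $|D_i(t)|\le Y$ a.s., so $\E[D_i(t)\mid\mathcal{G}]$ is well defined. Convergence in probability together with the domination by $Y$ implies convergence in $L^1$ (dominated convergence along subsequences, or Vitali's theorem), i.e.\ $\E[\,|D^n_i(t)-D_i(t)|\,]\to 0$. Since conditional expectation is an $L^1$-contraction,
\begin{align*}
\E\big[\,\big|\E[D^n_i(t)\mid\mathcal{G}]-\E[D_i(t)\mid\mathcal{G}]\big|\,\big]\le\E\big[\,|D^n_i(t)-D_i(t)|\,\big]\longrightarrow 0,
\end{align*}
so $\widetilde{D}^n_i(t)=\E[D^n_i(t)\mid\mathcal{G}]\to\E[D_i(t)\mid\mathcal{G}]$ in $L^1$, hence in probability. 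Therefore $\widetilde{D}(t)=(\E[D_1(t)\mid\mathcal{G}],\dots,\E[D_m(t)\mid\mathcal{G}])$ satisfies the defining relation \eqref{ISUdecomposition} for $\widetilde{R}$, which is the assertion.

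The only genuinely delicate point is the interchange of $\plim_n$ and $\E[\,\cdot\mid\mathcal{G}]$: convergence in probability alone is not preserved by conditioning, and this is precisely what the uniform domination hypothesis $|D^n_i(t)|\le Y$ is there to repair, by forcing the SU decompositions to be uniformly integrable and hence $L^1$-convergent. The identity $\widetilde{U}=\E[U\mid\mathcal{G}]$ is immediate from the stated form of $\widetilde{\varrho}$ once one reads that definition as applying verbatim to asynchronously stopped arguments; if preferred, it can simply be taken as the definition of the revaluation surplus surface of $\widetilde{R}$.
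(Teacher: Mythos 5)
Your proof is correct and follows essentially the same route as the paper's: both identify $\widetilde{U}(t_1,\ldots,t_m)=\E[U(t_1,\ldots,t_m)\mid\mathcal{G}]$, deduce $\widetilde{D}^n_i(t)=\E[D^n_i(t)\mid\mathcal{G}]$ from the telescoping structure, and pass to the limit using the domination by $Y$. Your detour through $L^1$-convergence and the contraction property of conditional expectation is in fact slightly more careful than the paper's direct appeal to a conditional dominated convergence theorem, since the ISU hypothesis only provides convergence in probability rather than almost sure convergence.
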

\begin{proof}
  Since  the revaluation surplus surfaces $U$ and  $\widetilde{U}$ are linked via the equation
\begin{align*}
	\widetilde{U}(t_1, \ldots ,t_m) = \mathbb{E}[U(t_1,\ldots ,t_m)|\mathcal{G}],
\end{align*}
the SU decomposition of $\widetilde{R}(t)-\widetilde{R}(0)$ is given by $\widetilde{D}^n(t)=(E[D_1^n(t)|\mathcal{G}], \ldots, E[D_m^n(t)|\mathcal{G}])$.
Using that $|D_i^n(t)|\leq Y$, $i=1,\ldots,m$, for some integrable random variable $Y$, the Dominated Convergence Theorem for conditional expectations almost surely yields
$$\widetilde{D}_i(t)=\lim\limits_{n\to \infty} \mathbb{E}[D^n_i(t)|\mathcal{G}]=E[D_i(t)|\mathcal{G}], \; i=1,\ldots,m.$$
\end{proof}

\begin{theorem}\label{Theorem:ISU2} Let $X$ be defined as in Theorem \ref{Theorem:ISU1}.  Then
\begin{align*}
  \varrho (X^t)= \E\left[ -H\left((\Phi^*,\Lambda^*)+\left( \sum_j X_{\Phi,j} , ( X_{u,jk} +X_{s,jk} )_{jk: j\neq k}\right)^t\right) \Bigg| \Phi, \Lambda \right]
\end{align*}
has the ISU decomposition
  \begin{align*}
D_{\Phi,j}(t)&=\int_{(0,t]}\frac{1}{\kappa(s-)} p_{aj}(0,s-) V_j^*(s-)\d (\widetilde{\Phi}-\Phi^*)(s), \\
D_{u,jk}(t)&= 0,\\
D_{s,jk}(t) &= -  \int_{(0,t]} \frac{1}{\kappa(s)} p_{aj}(0,s-) R_{jk}^*(s)\d (\Lambda_{jk}-\Lambda_{jk}^*)(s).
\end{align*}
In particular, the ISU decomposition does not depend on the update order.
\end{theorem}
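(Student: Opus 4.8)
The plan is to combine Theorem \ref{Theorem:ISU1} with the expectation-consistency Lemma \ref{ISU:ExpectationConsistency}, applied with $\mathcal{G}=\sigma(\Phi,\Lambda)$. By Corollary \ref{propositionfunctionalcollsurplus} (or directly by definition \eqref{DefRcoll}), the mapping $\widetilde\varrho$ here is exactly $\widetilde\varrho(X^t)=\mathbb{E}[\varrho(X^t)\mid\Phi,\Lambda]$ with $\varrho$ the individual-surplus mapping from Theorem \ref{Theorem:ISU1}. So once the hypotheses of Lemma \ref{ISU:ExpectationConsistency} are verified, its conclusion gives $\widetilde D_{\Phi,j}(t)=\mathbb{E}[D_{\Phi,j}(t)\mid\Phi,\Lambda]$, $\widetilde D_{u,jk}(t)=\mathbb{E}[D_{u,jk}(t)\mid\Phi,\Lambda]$, $\widetilde D_{s,jk}(t)=\mathbb{E}[D_{s,jk}(t)\mid\Phi,\Lambda]$, where the $D$'s are the ISU decomposition terms from Theorem \ref{Theorem:ISU1}. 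It then remains to evaluate these three conditional expectations.

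For the evaluation step I would argue as follows. Conditional on $(\Phi,\Lambda)$, the process $Z$ is Markov with transition matrix $p(s,t)$, and $I_j(s-)\,\d\Lambda_{jk}(s)$ is the $\mathbb{P}$-compensator of $\d N_{jk}(s)$ with respect to the natural filtration of $(Z^t,\Phi,\Lambda)_t$. Hence for $D_{u,jk}$, the integrand $\frac{1}{\kappa(s)}I_j(s-)R^*_{jk}(s)$ is predictable with respect to that filtration (note $\kappa$, $\Lambda$, $V^*_k$, $b_{jk}$ are all measurable functions of $(\Phi,\Lambda)$ and left limits), so $D_{u,jk}(t)$ is a martingale integral against the compensated jump measure; taking $\mathbb{E}[\,\cdot\mid\Phi,\Lambda]$ yields $D_{u,jk}(t)=0$ (modulo an integrability check to justify the conditional martingale property, which the domination bound supplies). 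For $D_{\Phi,j}$ and $D_{s,jk}$, the integrators $\d(\widetilde\Phi-\Phi^*)(s)$ and $\d(\Lambda_{jk}-\Lambda^*_{jk})(s)$ are $\sigma(\Phi,\Lambda)$-measurable, so conditioning commutes with the integral and only the factor $I_j(s-)$ needs to be replaced by $\mathbb{E}[I_j(s-)\mid\Phi,\Lambda]=p_{aj}(0,s-)$ (using $Z_0=a$, right-continuity, and the conditional Markov property); this is a conditional Fubini/stochastic-Fubini argument, legitimate because these integrators have finite variation (or, for $\widetilde\Phi$, are handled via the bracket-corrected form already appearing in Lemma \ref{lemmasurplusdecomp}) and the integrands are bounded on $[0,T]$.

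The remaining bookkeeping item is the domination hypothesis of Lemma \ref{ISU:ExpectationConsistency}: I must exhibit an integrable $Y$ with $|D^n_i(t)|\le Y$ uniformly in $n$. Here each SU increment $U(\cdots,t_{l+1},t_l,\cdots)-U(\cdots,t_l,t_l,\cdots)$ is a difference of values of $\varrho$ — i.e. of $-H$ evaluated at two perturbed valuation bases that differ only in one coordinate over one mesh interval — and $H$ from \eqref{functionaldef} is built from integrals of the bounded data $(B_j)_j$, $(b_{jk})_{jk}$, transition probabilities bounded by $1$, and the discount factors $1/\overline\kappa$. On the compact horizon $[0,T]$ these discount factors are bounded by a random constant depending only on the path of $\Phi$ and $\Phi^*$ (using $\Delta\overline\Phi>-1$), and since there are at most $k_n$ nonzero increments in each sum but their telescoping structure bounds each $D^n_i(t)$ by a fixed multiple of $\sup_{0\le s\le t}|\varrho(X^s)|$ plus total-variation terms of the finite-variation integrators — all of which are path-wise finite and integrable — one gets a uniform integrable bound $Y$. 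I expect this integrability/domination verification to be the main obstacle, since it requires care to see that the SU increments do not accumulate as the mesh refines; the rest is routine once the conditional-expectation computations above are organized.

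\medskip
\noindent\emph{Remark on the order-independence claim.} The final sentence of the theorem is immediate: the ISU limit is already shown to exist and to equal the displayed formulas regardless of which coordinate ordering is used in \eqref{SUdecomposition}, because Theorem \ref{Theorem:ISU1} establishes order-independence for $D$ and conditioning preserves it.
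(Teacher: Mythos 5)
Your proposal is correct and follows essentially the same route as the paper: apply Lemma \ref{ISU:ExpectationConsistency} with $\mathcal{G}=\sigma(\Phi,\Lambda)$ to the ISU decomposition of Theorem \ref{Theorem:ISU1}, kill $D_{u,jk}$ via the martingale property of $\d N_{jk}-I_j(s-)\,\d\Lambda_{jk}(s)$, and replace $I_j(s-)$ by $p_{aj}(0,s-)$ in the remaining terms. If anything, you supply more detail than the paper, which simply asserts that the model framework of Section 3 yields the domination hypothesis rather than sketching the bound as you do.
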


\begin{proposition} \label{AggregationOfISUs}
Let $X=(X_1, \ldots , X_m)$ be a given risk basis with
$$R(t) = \varrho ((X_1+X_2, X_3, \ldots ,X_m)^t)$$ for a suitable mapping $\varrho$, generating the ISU decomposition $D(t)=(D_1(t), \ldots, D_m(t))$. Then the partially aggregated risk basis
$$\widetilde{X}= (X_1+X_2, (X_3,X_4),X_5 \ldots ,X_m)$$  generates the ISU decomposition
$$\widetilde{D}(t)=(D_1(t)+D_2(t), D_3(t)+D_4(t),D_5(t) \ldots, D_m(t)).$$
\end{proposition}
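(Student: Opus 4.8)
The plan is to work directly at the level of the revaluation surplus surfaces and the telescoping sums that define the SU decompositions, and then pass to the limit. Write $U(t_1,\dots,t_m) = \varrho((X_1+X_2,X_3,\dots,X_m)^{\,t_1},(\cdot)^{\,t_2},\dots)$ for the surface attached to the original (unaggregated-but-already-partially-combined) risk basis in the statement; note that here $X_1$ and $X_2$ carry a \emph{single} shared update parameter because they enter $\varrho$ only through the sum $X_1+X_2$. More precisely, for the risk basis $X=(X_1,\dots,X_m)$ the surface is $U^X(t_1,\dots,t_m)=\varrho((X_1^{\,t_1}+X_2^{\,t_2}),X_3^{\,t_3},\dots,X_m^{\,t_m})$, and the ISU decomposition $D=(D_1,\dots,D_m)$ referred to in the hypothesis is the one obtained from the $m$-parameter telescoping series in which $X_1$ is updated first (its own block), then $X_2$, then $X_3$, and so on. For the aggregated basis $\widetilde X=(X_1+X_2,(X_3,X_4),X_5,\dots,X_m)$ the corresponding surface is $\widetilde U(s_1,s_2,s_3,s_4,\dots)=\varrho\big((X_1+X_2)^{\,s_1},X_3^{\,s_2},X_4^{\,s_2},X_5^{\,s_3},\dots\big)$, i.e.\ it is exactly $U^X$ with the parameter of $X_1$ and the parameter of $X_2$ identified, and with the parameters of $X_3$ and $X_4$ identified.

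The key observation is that, for a fixed partition $\mathcal{T}_n(t)=\{0=t_0<\dots<t_{k_n}=t\}$, the SU decomposition of $\widetilde R(t)-\widetilde R(0)$ with respect to $\widetilde X$ is obtained from that of $R(t)-R(0)$ with respect to $X$ by \emph{merging consecutive blocks of the telescoping sum}. Indeed, in the $m$-parameter telescoping series for $U^X$ along $\mathcal{T}_n(t)$, the block that advances the $X_1$-parameter from $t_\ell$ to $t_{\ell+1}$ followed immediately by the block that advances the $X_2$-parameter from $t_\ell$ to $t_{\ell+1}$ together telescope to $U^X(t_{\ell+1},t_{\ell+1},t_\ell,\dots)-U^X(t_\ell,t_\ell,t_\ell,\dots)$, which is precisely the first block of the telescoping series for $\widetilde U$ along $\mathcal{T}_n(t)$; summing over $\ell$ gives $\widetilde D_1^{\,n}(t)=D_1^{\,n}(t)+D_2^{\,n}(t)$. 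The identical argument applied to the $X_3$- and $X_4$-parameters gives $\widetilde D_2^{\,n}(t)=D_3^{\,n}(t)+D_4^{\,n}(t)$, while all remaining blocks are untouched, so $\widetilde D_j^{\,n}(t)=D_{j+1}^{\,n}(t)$ for $j\ge 3$. This is the combinatorial heart of the argument; it uses nothing but the telescoping identity and the fact that identifying two update parameters in $U^X$ yields $\widetilde U$.

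Having established $\widetilde D^{\,n}(t)=(D_1^{\,n}(t)+D_2^{\,n}(t),\,D_3^{\,n}(t)+D_4^{\,n}(t),\,D_5^{\,n}(t),\dots,D_m^{\,n}(t))$ for every $n$, the conclusion follows by taking $\plim_{n\to\infty}$ on both sides along the given sequence of partitions: $\plim$ is linear, so the right-hand side converges in probability to $(D_1(t)+D_2(t),D_3(t)+D_4(t),D_5(t),\dots,D_m(t))$, which therefore is by definition the ISU decomposition of $\widetilde R(t)-\widetilde R(0)$ with respect to $\widetilde X$. The only genuine subtlety — and the step I would be most careful about — is bookkeeping: one must make sure that in the $m$-parameter telescoping series the $X_1$-block and the $X_2$-block really do appear consecutively (this is where the chosen update order "$X_1$ then $X_2$ then $X_3$ then $X_4$ then $\dots$" matters), and that identifying $t_1$ with $t_2$ in the surface $U^X$ gives back exactly the two-parameter-merged surface $\widetilde U$ rather than something that differs off the diagonal. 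Once the indexing is set up cleanly, no analytic work beyond the linearity of $\plim$ is required, since existence of the ISU limits for $D$ is assumed in the hypothesis.
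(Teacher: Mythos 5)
Your proposal is correct and follows essentially the same route as the paper: you identify the aggregated surface with the original one via parameter identification, $\widetilde U(s_1,s_2,\dots)=U(s_1,s_1,s_2,s_2,\dots)$, observe that consecutive telescoping blocks merge so that $\widetilde D^{\,n}(t)=(D_1^n(t)+D_2^n(t),D_3^n(t)+D_4^n(t),D_5^n(t),\dots,D_m^n(t))$ for every partition, and then pass to the probability limit. The bookkeeping point you flag (consecutiveness of the $X_1$- and $X_2$-blocks in the chosen update order) is exactly the implicit content of the paper's one-line argument, so nothing is missing.
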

\begin{proof}
  Since  the revaluation surplus surfaces $U$ and  $\widetilde{U}$ are linked via the equation
 \begin{align*}
   \widetilde{U}(t_1, t_3, t_5 \ldots ,t_m) = U(t_1,t_1, t_3, t_3, t_5\ldots ,t_m),
 \end{align*}
 the SU decompositions $D^n$ and $\widetilde{D}^n$ with respect to $\mathcal{T}_n(t)$ satisfy $$\widetilde{D}^n(t)=(D^n_1(t)+D^n_2(t), D^n_3(t)+D_4^n(t), D_5^n(t), \ldots, D^n_m(t)).$$ The latter equation carries through the limit \eqref{ISUdecomposition} to the ISU decompositions.
\end{proof}

\section{Examples}\label{sectionExamples}

We continue with the examples for the risk basis $X$ and the mapping $\varrho$ from section \ref{Section:LifeInsuranceModel} and present the corresponding ISU decompositions.

\subsection*{Decomposition of the individual revaluation surplus}

Let $R$ be the individual revaluation surplus according to \eqref{DefRind}.
\begin{example}\label{IndSurplDec1}
	Suppose that we are in the setting of Example \ref{ExpRX1}, where
	we distinguish between  financial risk, unsystematic biometric risk  and systematic biometric risk. 
	By applying Theorem \ref{Theorem:ISU1} and Proposition \ref{AggregationOfISUs} we obtain the ISU decomposition
	\begin{align*}
		D_{\Phi}(t)&=\int_{(0,t]}\frac{1}{\kappa(s-)} \sum_j I_j(s-) V_j^*(s-)\d (\widetilde{\Phi}-\Phi^*)(s), \\
		D_{u}(t)&= -  \sum_{jk:j \neq k} \int_{(0,t]} \frac{1}{\kappa(s)}I_j(s-) R_{jk}^*(s)\d (N_{jk}-\Lambda_{jk})(s),\\
		D_{s}(t) &= - \sum_{jk:j \neq k} \int_{(0,t]} \frac{1}{\kappa(s)} I_j(s-)R_{jk}^*(s)\d (\Lambda_{jk}-\Lambda_{jk}^*)(s).
	\end{align*}
\end{example}
\begin{example}
	Suppose that we are in the setting of Example \ref{ExpRX2}, where we distinguish between financial risk and  transition-wise biometric risks. 
	By applying Theorem \ref{Theorem:ISU1} and Proposition \ref{AggregationOfISUs} we obtain the ISU decomposition
	\begin{align*}
		D_{\Phi}(t)&=\int_{(0,t]}\frac{1}{\kappa(s-)} \sum_j I_j(s-) V_j^*(s-)\d (\widetilde{\Phi}-\Phi^*)(s), \\
		D_{jk}(t)&= -   \int_{(0,t]} \frac{1}{\kappa(s)}I_j(s-) R_{jk}^*(s) \d (N_{jk}-\Lambda^*_{jk})(s),\quad j,k \in \mathcal{Z}, \, j \neq k.
	\end{align*}
	As a special case this ISU decomposition includes the heuristic approach of Ramlau-Hansen (1988, formula (4.7)) for subdividing  biometric surplus in a transition-wise way.
\end{example}

\begin{example}\label{IndSurplDec3}
	Suppose that we are in the setting of Example \ref{ExpRX3}, where
	we distinguish unsystematic biometric risk and state-wise remaining risks. By applying Theorem \ref{Theorem:ISU1} and Proposition \ref{AggregationOfISUs} we obtain the ISU decomposition
	\begin{align*}
		D_{u}(t)&= -  \sum_{jk:j \neq k} \int_{(0,t]} \frac{1}{\kappa(s)}I_j(s-) R_{jk}^*(s) \d (N_{jk}-\Lambda_{jk})(s),\\
		D_{j}(t)&=\int_{(0,t]}\frac{1}{\kappa(s-)} I_j(s-)\Big(  V_j^*(s-)\d (\widetilde{\Phi}-\Phi^*)(s)
		- \sum_{k:k \neq j} R_{jk}^*(s)\d (\Lambda_{jk}-\Lambda_{jk}^*)(s)\Big),  \quad j \in \mathcal{Z}.
	\end{align*}
	As a special case this ISU decomposition includes heuristic approaches of Ramlau-Hansen (1988, formula before (4.10)) and Norberg (1999, formula (5.4)) for splitting off unsystematic biometric surplus and then subdividing the remaining surplus in a state-wise way.
\end{example}

In Example \ref{IndSurplDec1} and Example \ref{IndSurplDec3} we split off the surplus contribution of the unsystematic biometric risk.  Since this unsystematic biometric risk is diversifiable in the insurance portfolio, its contribution  $\kappa(t) D_u(t)$ to the total surplus $S(t)$, cf.~\eqref{totalSurplus2}, is typically credited or debited to the insurer.
M{\o}ller \& Steffensen (2007, chapter 6.3) denote the remaining surplus $S(t) - \kappa(t) D_u(t)$ as the 'systematic surplus'. This systematic surplus mainly belongs to the policyholder.

Asmussen \& Steffensen (2020, chapter VI.4) split also the financial risk into an unsystematic part and a systematic part and argue that the  unsystematic financial risk surplus contribution should be fully credited or debited to the insurer, similarly to the unsystematic biometric risk surplus contribution. They distinguish unsystematic and systematic financial risk by splitting
$\Phi$ into a martingale part and a remaining systematic part. If we  likewise split $\Phi-\Phi^*$ in the risk basis $X$ into a martingale part and a remaining systematic part, then the resulting ISU decomposition allows us to distinguish between systematic and unsystematic surplus contributions. If we then collect the systematic biometrical and systematic financial surplus contributions, then we just end up with the systematic surplus formula of Asmussen \& Steffensen (2020, chapter VI.4). We do not show the detailed calculations here  but leave them  to the reader.

\subsection*{Decomposition of the mean portfolio revaluation surplus}

Let $R'$ be the mean portfolio revaluation surplus according to \eqref{DefRcoll}.

\begin{example}
	We choose the setting from Example \ref{ExpRX1} but  adopt the mean portfolio perspective.
	By applying Theorem \ref{Theorem:ISU2} and Proposition \ref{AggregationOfISUs} we obtain the ISU decomposition
	\begin{align*}
		D_{\Phi}(t)&=\int_{(0,t]}\frac{1}{\kappa(s-)} \sum_j p_{aj}(0,t-) V_j^*(s-)\d (\widetilde{\Phi}-\Phi^*)(s), \\
		D_{u}(t) &= 0,\\
		D_{s}(t) &= - \sum_{jk:j \neq k} \int_{(0,t]} \frac{1}{\kappa(s)} p_{aj}(0,t-) R_{jk}^*(s)\d (\Lambda_{jk}-\Lambda_{jk}^*)(s).
	\end{align*}
The conditional expectation in \eqref{DefRcoll} and  \eqref{DefScoll} completely eliminates the unsystematic biometric risk, which explains why  we have $D_{u}(t) = 0$ here.
\end{example}
\begin{example}
	Here we choose the setting from Example \ref{ExpRX2} but adopt the mean portfolio perspective.
	By applying Theorem \ref{Theorem:ISU2} and Proposition \ref{AggregationOfISUs} we obtain the ISU decomposition
	\begin{align*}
		D_{\Phi}(t)&=\int_{(0,t]}\frac{1}{\kappa(s-)} \sum_j p_{aj}(0,t-) V_j^*(s-)\d (\widetilde{\Phi}-\Phi^*)(s), \\
		D_{jk}(t)&= -   \int_{(0,t]} \frac{1}{\kappa(s)}p_{aj}(0,t-) R_{jk}^*(s) \d (\Lambda_{jk}-\Lambda_{jk}^*) (s), \quad j,k \in \mathcal{Z}, \, j \neq k.
	\end{align*}
The next example shows an application of this formula.
\end{example}

\begin{example}
	We continue with the previous example  but focus here on the specific setting of Example
	\ref{Example:RinGermany}. 
	One can show  that the SU decomposition of $R'(k+1)-R'(k)$ with respect to an integer  partition equals
	\begin{align}\begin{split}\label{SUGermanLI}
			U(k+1,k,k)-U(k,k,k) &=  e^{-\int_0^{k+1} \phi(u)\, \d u } \phantom{.}_kp_{x} \,V^*_a(k)\,\big(i_k-i_k^*\big), \\
			U(k+1,k+1,k)-U(k+1,k,k) & = e^{-\int_0^{k+1} \phi(u)\, \d u } \phantom{.}_kp_{x}  \big(V^*_a(k+1-)-d_{k+1}\big)  \,\big(q_{x+k} -q^*_{x+k}\big),\\
			U(k+1,k+1,k+1)-U(k+1,k+1,k) & =  e^{-\int_0^{k+1} \phi(u)\, \d u } \phantom{.}_kp_{x}  \big(V^*_a(k+1-)-s_{k+1} \big)  \,\big(r_{x+k} -r^*_{x+k}\big),
	\end{split}\end{align}
	see the appendix.
	This decomposition is the standard surplus decomposition formula used German life insurance, cf.~Milbrodt \& Helbig (1999, section 11.B).  We can interpret the latter SU decomposition as an approximation of the ISU decomposition of $R'(k+1)-R'(k)$, which equals here
	\begin{align}\begin{split}\label{ISUGermanLI}
			D_{\Phi}(k+1)-D_{\Phi}(k) &= \int_{(k,k+1]} e^{-\int_0^{s} \phi(u)\, \d u } \phantom{.}_sp_{x}\,
			V^*_a(s) \, \d (\Phi-\Phi^*)  (s), \\
			D_{ad}(k+1)-D_{ad}(k) & = \int_{(k,k+1]} e^{-\int_0^{s} \phi(u)\, \d u } \phantom{.}_sp_{x}    \big( V^*_a(s)-b_{ad}(s)  \big) \, \d (\Lambda_{ad} -\Lambda^*_{ad})(s),\\
			D_{as}(k+1)-D_{as}(k) & = \int_{(k,k+1]}  e^{-\int_0^{s} \phi(u)\, \d u } \phantom{.}_sp_{x}   \big( V^*_a(s)-b_{as}(s) \big) \, \d(\Lambda_{as} -\Lambda^*_{as})(s).
	\end{split}\end{align}
	The latter decomposition is invariant with respect to a reordering of the components of $X$, whereas the SU decomposition changes.
	Therefore, we recommend to replace the traditional SU  decomposition \eqref{SUGermanLI}  by the  ISU  decomposition \eqref{ISUGermanLI}.
\end{example}

\begin{example}
	We choose the setting from Example \ref{ExpRX3} but adopt the mean portfolio perspective.
	By applying Theorem \ref{Theorem:ISU2} and Proposition \ref{AggregationOfISUs} we obtain the ISU decomposition
	\begin{align*}
		D_u(t)&=0,\\
		D_{j}(t)&=\int_{(0,t]}\frac{1}{\kappa(s-)} p_{aj}(0,t-) \Big( V_j^*(s-)\d (\widetilde{\Phi}-\Phi^*)(s) -\sum_{k:k \neq j} R_{jk}^*(s)\d (\Lambda_{jk}-\Lambda_{jk}^*)(s) \Big), \quad j \in \mathcal{Z}.
	\end{align*}
	As a special case this ISU decomposition includes heuristic approaches of Ramlau-Hansen (1991, formula (3.2)) and Norberg (1999, formula (5.7)) for subdividing mean portfolio surplus in a state-wise manner.
\end{example}

\newpage
\section{Alternative decomposition principles}
In section 4, we already mentioned that the ISU decomposition may depend on the update order. In this section, we want to elaborate on that point by discussing two alternative decomposition principles in the setup of chapter 4. Instead of updating the sources of risk sequentially, we could also update only one source of risk at a time and quantify its impact on total revaluation surplus $R(t)-R(0)$. More precise, for any  partition $\mathcal{T}(t) = \{0=t_0 < t_1 < \cdots < t_k =t \}$ of the interval $[0,t]$ we can decompose
\begin{align*}
 R(t)-R(0)  &= U(t, \ldots, t) -U(0, \ldots,0)\\
  &= \sum_{l=0}^{k-1} \Big( U(t_{l+1},t_{l}, \ldots,t_{l})-U(t_{l}, \ldots,t_{l})\Big) \\
   &\quad +  \sum_{l=0}^{k-1} \Big( U(t_{l},t_{l+1}, t_{l},\ldots,t_{l})-U(t_{l}, \ldots,t_{l})\Big) \\
  &\quad + \ldots \\
  & \quad + \sum_{l=0}^{k-1}  \Big( U(t_{l},\ldots, t_{l}, t_{l+1})-U(t_{l},\ldots, , t_{l})\Big) \\
  &\quad +  \sum_{l=0}^{k-1} \Big( U(t_{l+1},\ldots,t_{l+1})-U(t_{l},\ldots,t_{l})\Big)\\
  &\quad -\sum_{l=0}^{k-1} \Big(U(t_{l+1},t_l,\ldots,t_l)-U(t_{l},\ldots,t_{l})+\ldots+U(t_{l},\ldots,t_l,t_{l+1})-U(t_{l},\ldots,t_{l})\Big)
 \end{align*}
Here, the first $m$ sums quantify the single effect of the corresponding source of risk. Following Biewen (2014), we call them the \textit{ceteris paribus effects}. Since the ceteris paribus effects do not necessarily add up to the total revaluation surplus $R(t)-R(0)$, we get an extra term in the last two lines, which is called the \textit{interaction effect} (cf. Biewen (2014)). Based on this construction, we get a decomposition principle with a joint risk factor.

\begin{definition}
The  random vector   $D(t)=(D_1(t), \ldots, D_m(t),\overline{D}(t))$  defined by
\begin{align}\begin{split}
  D_1(t) &= \sum_{l=0}^{k-1} \Big( U(t_{l+1},t_{l}, \ldots,t_{l})-U(t_{l},t_{l}, \ldots,t_{l})\Big), \\
  & \cdots  \\
    D_{m}(t) &= \sum_{l=0}^{k-1 }  \Big( U(t_{l},\ldots, t_{l}, t_{l+1})-U(t_{l},\ldots, t_{l})\Big), \\
    \overline{D}(t) &=R(t)-R(0)-\sum\limits_{j=1}^m D_j(t)
 \end{split}\end{align}
  is called  the \emph{OAT (one-at-a-time) decomposition}  of $R(t)-R(0)$ with respect to $\mathcal{T}(t)$.
\end{definition}
The OAT decomposition principle is also known in economics, see for example Biewen (2014). In contrast to the ISU decomposition, the OAT decomposition is symmetric with respect to the risk factors, i.e. it does not depend on the order of the risk basis. Nevertheless, we get a joint risk factor that cannot be assigned to any source of risk. In Chapter 4, we faced the order dependence of the SU decomposition by considering increasing sequences of partitions of $[0,t]$. Similarly, we face the unassignable interaction effect in the OAT decomposition.

Let  $\mathcal{T}_n(t)=\{ 0=t^n_0 < t^n_1 < \cdots  < t^n_{k_n} =t \}$, $n \in \mathbb{N}$, be a sequence of
 partitions of $[0,t]$  with vanishing step lengths (i.e.~$  \lim_{ n\rightarrow \infty} \max_{1 \leq l  \leq k_{n}} |t^n_{l}-t^n_{l-1}| =0$).
 For each $n \in \mathbb{N}$ let  $D^n(t)=(D_1^{n}(t), \ldots ,D_{m}^{n}(t),\overline{D}^n(t))$ be the OAT decomposition of $R(t)-R(0)$ with respect to $\mathcal{T}_n(t)$. We are looking for a random vector  $D(t)=(D_1(t),\ldots,D_m(t),\overline{D}(t))$ that satisfies
\begin{align}\label{IOATdecomposition}\begin{split}
 D_i(t) &= \plim_{n \rightarrow \infty} D^{n}_i(t), \quad i \in \{1, \ldots, m\}, \\
 \overline{D}(t)&=\plim_{n \rightarrow \infty} \overline{D}^n(t).
 \end{split}\end{align}
\begin{definition}
 Let $(\mathcal{T}_n(t))_{n \in \mathbb{N}}$ be a sequence of partitions of $[0,t]$ with vanishing step lengths.
 If  $D(t)=(D_1(t),\ldots,D_m(t),\overline{D}(t))$ satisfies \eqref{IOATdecomposition}, then we call $D(t)$ the \emph{IOAT (infinitesimal one-at-a-time) decomposition}  of $R(t)-R(0)$ with respect to $(\mathcal{T}_n(t))_{n \in \mathbb{N}}$.
\end{definition}
The next theorem characterizes the relation between the ISU decomposition and the IOAT decomposition.
\begin{theorem}\label{Theorem:ISU=IOAT}
The following statements are equivalent:
\begin{itemize}
\item[a)] The ISU decomposition is independent of update order.
\item[b)] For each update order, the ISU decomposition is equal to the ceteris paribus effects of the IOAT decomposition.
\end{itemize}
In both cases, the interaction effect is zero.
\end{theorem}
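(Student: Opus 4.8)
The plan is to reduce the statement to two elementary bookkeeping facts about the telescoping sums defining the SU and OAT decompositions, and then pass to the $\plim$. For a subdivision $\mathcal{T}_n(t)$ of $[0,t]$ and an update order $\sigma$ (a permutation of $\{1,\dots,m\}$) write $D^{n,\sigma}_i(t)$ for the SU contribution attributed to risk factor $i$, let $D^{n,\mathrm{OAT}}_i(t)$ be the $i$-th ceteris paribus effect, and let $\overline D^n(t)=R(t)-R(0)-\sum_{i=1}^m D^{n,\mathrm{OAT}}_i(t)$ be the interaction term. Two observations are immediate from the definitions. First, telescoping gives $\sum_{i=1}^m D^{n,\sigma}_i(t)=R(t)-R(0)$ for every $n$ and every $\sigma$, so taking $\plim$ (a finite sum of convergent summands) yields $\sum_{i=1}^m D^{\mathrm{ISU},\sigma}_i(t)=R(t)-R(0)$ whenever the ISU decomposition exists. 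Second, comparing the two telescoping series term by term, $D^{n,\mathrm{OAT}}_i(t)$ coincides, \emph{for every subdivision}, with the SU contribution $D^{n,\sigma_i}_i(t)$ under any update order $\sigma_i$ that updates risk factor $i$ first, because for such an order there are no risk factors updated before $i$, so the SU increment for $i$ degenerates to $U(\dots,t_{l+1}\text{ in slot }i,\dots)-U(t_l,\dots,t_l)$.

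Given these two facts the equivalence is short. If a) holds, fix any order $\sigma$ and index $i$ and let $\sigma_i$ be an order updating $i$ first; then $D^{\mathrm{ISU},\sigma}_i(t)=D^{\mathrm{ISU},\sigma_i}_i(t)=\plim_n D^{n,\sigma_i}_i(t)=\plim_n D^{n,\mathrm{OAT}}_i(t)$, the right-hand side being by definition the $i$-th ceteris paribus effect of the IOAT decomposition (whose existence is thereby established). Hence b) holds. Conversely, if b) holds, then the ISU decomposition equals the IOAT ceteris paribus effects, which carry no reference to an ordering, so the ISU decomposition is the same for every $\sigma$; this is a).

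In both cases the interaction effect vanishes: passing to the $\plim$ in $\overline D^n(t)=R(t)-R(0)-\sum_i D^{n,\mathrm{OAT}}_i(t)$ gives $\overline D(t)=R(t)-R(0)-\sum_i D^{\mathrm{IOAT}}_i(t)$, and since the $D^{\mathrm{IOAT}}_i(t)$ equal the $D^{\mathrm{ISU},\sigma}_i(t)$ by b), the first observation gives $\overline D(t)=0$. The only point requiring care is that all the $\plim$'s used actually exist and that finite sums may be interchanged with them; both are covered by the standing assumption that the ISU decomposition (for each update order) and the IOAT decomposition exist. I do not expect any genuine analytic obstacle here — the entire content of the theorem is the purely combinatorial identity in the second observation, which I would highlight as the key step and prove carefully, treating the remaining limit manipulations as routine.
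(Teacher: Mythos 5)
Your proof is correct and follows essentially the same route as the paper's: the key step in both is the observation that for any update order placing risk factor $i$ first, the SU contribution of $i$ coincides term by term with the $i$-th ceteris paribus effect of the OAT decomposition, so the equivalence and the vanishing of the interaction effect follow by passing to the $\plim$ and using that SU contributions telescope to $R(t)-R(0)$. You are somewhat more explicit than the paper about the existence of the limits and the interchange of $\plim$ with finite sums, which is a welcome addition but not a different argument.
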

\begin{proof}
The proof follows Biewen (2014). Let us fix a source of risk ($i=1,\ldots,m$).
Choosing an update order, such that the this source of risk is updated firstly, the
corresponding risk factor of the ISU decomposition coincides per definition with the ceteris paribus effect of the IOAT decomposition. If the ISU decomposition is independent of update order, the risk factor, corresponding to the fixed source of risk, equals the ceteris paribus effect of the IOAT decomposition for each update order. \par
 On the other hand, the statement in b) directly implies, that the ISU decomposition is independent of update order. Furthermore, if the ISU decomposition equals the IOAT decomposition, then the ceteris paribus effects sum up to total risk $R(t)-R(0)$, therefore the interaction effect is zero.
\end{proof}
By subdividing the interaction effect into different groups of interaction effects (depending on the number of involved risk factors), Biewen (2014) even shows that the particular interaction effects are zero if and only if the ISU decomposition is independent of update order. \par
If the interaction effect is non-zero, neither the ISU decomposition nor the IOAT decomposition yields a unique decomposition in the sense of (\ref{aim:decomp}). One possible solution for this problem is to build a decomposition principle based on the ISU decomposition principle that is symmetric with respect to the sources of risk. For that, let $\pi\colon \{1,\ldots,m\} \to \{1,\ldots,m\}$ be a permutation that represents an update order for the ISU decomposition. The set of all possible permutations on $\{1,\ldots,m\}$ is denoted by $\sigma_m$.
\begin{definition}
Let $(\mathcal{T}_n(t))_{n \in \mathbb{N}}$ be an increasing sequence of partitions of $[0,t]$ with vanishing step lengths and let $\pi\in \sigma_m$. Further, let $D^\pi(t)=(D^\pi_1(t),\ldots,D^\pi_m(t))$ denote the ISU decomposition of $R(t)-R(0)$ with respect to $\pi$ and with respect to $(\mathcal{T}_n)_n$. The  random vector   $D(t)=(D_1(t), \ldots, D_m(t))$  defined by
\begin{align}\begin{split}
  D_{1}(t) &= \frac{1}{m!}\sum\limits_{\pi\in \sigma_m} D^{\pi}_{\pi(1)}(t), \\
  & \cdots  \\
    D_{m}(t) &= \frac{1}{m!}\sum\limits_{\pi\in \sigma_m} D^{\pi}_{\pi(m)}(t),
 \end{split}\end{align}
  is called  the \emph{averaged ISU decomposition}  of $R(t)-R(0)$ with respect to $(\mathcal{T}_n(t))_{n \in \mathbb{N}}$.
\end{definition}
In a similar manner, Shorrocks (2013) proposes the averaged SU decomposition (without taking limits) in economics literature. By construction, the averaged ISU decomposition principle is symmetric with respect to the risk basis and therefore gives a unique surplus decomposition even if the interaction effect is non-zero. Furthermore, the averaged ISU decomposition is in line with the previously proposed decomposition principles as the next theorem shows.
\begin{theorem}
If the ISU decomposition is independent of update order, then ISU (for each update order), IOAT and averaged ISU yield the same decomposition.
\end{theorem}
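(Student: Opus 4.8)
The plan is to deduce the statement directly from the definitions and from Theorem~\ref{Theorem:ISU=IOAT}, with no further analysis needed. Throughout, fix the sequence $(\mathcal{T}_n(t))_n$ of partitions with vanishing step lengths. For a permutation $\pi\in\sigma_m$ encoding an update order, write $D^{\pi}(t)=(D^{\pi}_1(t),\ldots,D^{\pi}_m(t))$ for the corresponding ISU decomposition, with the convention (consistent with the proof of Theorem~\ref{Theorem:ISU=IOAT}) that $D^{\pi}_{\pi(i)}(t)$ is the contribution attributed to risk source $i$. The hypothesis that the ISU decomposition is independent of the update order means exactly that, for each $i\in\{1,\ldots,m\}$, the random variable $D^{\pi}_{\pi(i)}(t)$ does not depend on $\pi$; I denote this common value by $D_i(t)$. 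Then ``ISU for each update order'' is literally the decomposition $R(t)-R(0)=D_1(t)+\cdots+D_m(t)$, and it remains to identify the averaged ISU and the IOAT decompositions with it.

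For the averaged ISU decomposition I would substitute the independence hypothesis into its defining formula: its $i$-th component is $\tfrac{1}{m!}\sum_{\pi\in\sigma_m}D^{\pi}_{\pi(i)}(t)$, and since every summand equals the common value $D_i(t)$, the average over the $m!$ permutations collapses to $D_i(t)$. Hence the averaged ISU decomposition coincides with the ISU decomposition.

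For the IOAT decomposition I would invoke the implication a)~$\Rightarrow$~b) of Theorem~\ref{Theorem:ISU=IOAT} together with its final assertion: under the hypothesis, for every update order the ISU decomposition equals the vector of ceteris paribus effects of the IOAT decomposition, and the interaction effect $\overline{D}(t)$ is zero. Thus the IOAT decomposition is $(D_1(t),\ldots,D_m(t),0)$, and since its joint risk factor vanishes identically it yields the same additive decomposition $R(t)-R(0)=D_1(t)+\cdots+D_m(t)$ as ISU and as averaged ISU. This proves the claim.

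I do not expect a genuine obstacle here, since no limits, conditional expectations, or stochastic calculus enter beyond what Theorem~\ref{Theorem:ISU=IOAT} already provides. The only point requiring care is the bookkeeping of indices: one must be consistent about whether the index of $D^{\pi}$ refers to the update position or to the risk source, so that ``$D^{\pi}_{\pi(i)}(t)$ is independent of $\pi$'' is the right formalization of ``independent of update order'' and so that the average in the definition of the averaged ISU decomposition genuinely collapses term by term. Once this convention is fixed, the three decompositions agree componentwise.
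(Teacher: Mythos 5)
Your proposal is correct and follows essentially the same route as the paper: invoke Theorem \ref{Theorem:ISU=IOAT} (a)~$\Rightarrow$~b) with vanishing interaction effect) to identify ISU with IOAT, and use $D^{\pi}_{\pi(i)}(t)=D_i(t)$ together with $\#\sigma_m=m!$ to collapse the average in the averaged ISU decomposition. The indexing convention you flag is exactly the one the paper adopts, so no further argument is needed.
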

\begin{proof}
Assume that the ISU decomposition principle yields a decomposition $(D_1(t),\ldots,D_m(t))$ for each update order. Then, by Theorem \ref{Theorem:ISU=IOAT}, the ISU decomposition is equal to the IOAT decomposition for each update order. Furthermore, it holds $D^{\pi}_{\pi(i)}(t)=D_i(t)$, $i=1,\ldots,m$ for every permutation $\pi$. Since $\# \sigma_m=m!$, the averaged ISU decomposition is also given by $(D_1(t),\ldots,D_m(t))$.
\end{proof}
As shown in Chapter 5, the ISU decompositions in our life insurance model do not depend on the update order. Thus, we directly get the following result.
\begin{corollary}
For all examples in section \ref{sectionExamples}  the IOAT decomposition and the averaged ISU decomposition are both equal to the ISU decomposition.
\end{corollary}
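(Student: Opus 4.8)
The plan is to obtain the corollary as an immediate consequence of the theorem stated just before it, together with the order-independence results of Section~\ref{sectionExamples}. That theorem says: whenever the ISU decomposition of $R(t)-R(0)$ is independent of the update order, the ISU decomposition (for every update order), the IOAT decomposition and the averaged ISU decomposition all coincide. Hence it suffices to check that in each of the examples of Section~\ref{sectionExamples} the ISU decomposition is order-independent.

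First I would run through the examples one by one and note that each is obtained by specialising either Theorem~\ref{Theorem:ISU1} (for the individual revaluation surplus) or Theorem~\ref{Theorem:ISU2} (for the mean portfolio revaluation surplus), possibly followed by an application of Proposition~\ref{AggregationOfISUs} when the chosen risk basis groups several of the elementary factors $X_{\Phi,j}$, $X_{u,jk}$, $X_{s,jk}$ together. Both Theorem~\ref{Theorem:ISU1} and Theorem~\ref{Theorem:ISU2} conclude explicitly that the ISU decomposition does not depend on the update order, so the elementary decompositions are order-independent to start with.

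Second, I would observe that the aggregation step preserves order-independence: in Proposition~\ref{AggregationOfISUs} the aggregated components are sums $D_1(t)+D_2(t)$, $D_3(t)+D_4(t)$, etc., of components of an already order-independent decomposition, and a sum of terms each invariant under reordering of the elementary factors is again invariant; the non-aggregated components are unchanged. Consequently every ISU decomposition appearing in Section~\ref{sectionExamples} is independent of the update order, and applying the preceding theorem (which in turn rests on Theorem~\ref{Theorem:ISU=IOAT}) to each example gives that its IOAT decomposition and its averaged ISU decomposition both equal its ISU decomposition.

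I do not expect a genuine obstacle here; the proof is essentially bookkeeping. The only point that needs a little care is verifying that every example in Section~\ref{sectionExamples} does fall within the scope of Theorem~\ref{Theorem:ISU1} or Theorem~\ref{Theorem:ISU2} (after the aggregation step where relevant), so that the order-independence assertions of those theorems may legitimately be invoked; once this is confirmed the corollary follows at once.
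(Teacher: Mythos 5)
Your proposal matches the paper's own (very short) argument: the paper likewise derives the corollary directly from the preceding theorem together with the order-independence established in Theorems \ref{Theorem:ISU1} and \ref{Theorem:ISU2}, with aggregation via Proposition \ref{AggregationOfISUs} handled implicitly. Your additional remark that aggregation preserves order-independence is a correct and useful piece of bookkeeping that the paper leaves tacit.
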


\section{Conclusion}

The ISU decomposition principle allows us to unite the various surplus decomposition formulas from the life insurance literature under one banner. By doing so, we replace the common heuristic constructions by a general and consistent decomposition principle.  Furthermore, the generality of the ISU construction paves the way for future extensions of life insurance bonus schemes. This is in particular relevant for new insurance forms that reward risk averse behaviour by activity tracking. The activity of an insured is a risk factor that contributes to the overall surplus. An attribution of profits and losses to the tracked activities of the insured is the necessary prerequisite for an activity bonus or activity malus. The ISU decomposition principle provides a powerful tool for that.  As ISU decompositions can be naturally approximated by SU decompositions, they are actually easy to implement in insurance practice.

The ISU concept is also useful beyond surplus decompositions in with-profit life insurance. Whenever the profits and losses of a financial entity shall be decomposed with respect to their sources, the ISU decomposition is a viable and convincing tool for that. One of the key ideas in this paper is to overcome the well-known limitations of SU and OAT decompositions by an infinitesimal approach.
So our ISU, averaged ISU and IOAT concepts may be of help in all that applications where SU and OAT decompositions are currently in use. We recommend to generally use (averaged) ISU rather than IOAT decompositions, since the former always guarantee additivity of the decompositions.

\section*{References}

\bigskip {\small
\begin{list}{}{\leftmargin1cm\itemindent-1cm\itemsep0cm}
\item{Asmussen, S.~and Steffensen, M., 2020. Risk and Insurance: A Graduate Text. Vol. 96. Springer Nature.}

\item{Biewen, M., 2014. A general decomposition formula with interaction effects. Applied Economics Letters, 21(9), 636-642.}

\item{Fortin, N., Lemieux, T., Firpo, S., 2011. Chapter 1-decomposition methods in economics. Volume 4, Part A of Handbook of Labor Economics. Elsevier 10, S0169-7218.}

\item {Milbrodt, H. and Helbig, M., 1999. Mathematische Methoden der Personenversicherung. Walter de Gruyter.}

\item{ M{\o}ller, T., Steffensen, M., 2007. Market-valuation methods in life and pension insurance. Cambridge University Press.}

\item{Norberg, R., 1999. A theory of bonus in life insurance. Finance and Stochastics 3/4, 373-390.}

\item{Protter, P.E., 2005. Stochastic Integration and Differential equations, 2nd edition, Springer.}

\item{Ramlau-Hansen, H., 1988. The emergence of profit in life insurance. Insurance: Mathematics and Economics, 7(4), 225-236.}

\item{Ramlau-Hansen, H., 1991. Distribution of surplus in life insurance. ASTIN Bulletin: The Journal of the IAA, 21/1, 57-71.}

\item{Shorrocks, A.F., 2013. Decomposition procedures for distributional analysis: a unified framework based on the Shapley value. The Journal of Economic Inequality 11, 99–126.}

\item{Werner, D., 2018. Funktionalanalysis, 8th edition, Springer.}
\end{list}}

\appendix
\section{Appendix}
\subsection{Proofs}
\begin{proof}[Proof of Lemma \ref{lemmasurplusdecomp}]
As a shorthand notation, we define multivariate processes $C^*=(C_1^*,\ldots,C_n^*)^\top$ and $\overline{C}=(\overline{C}_1,\ldots,\overline{C}_n)^{\top}$ by \begin{align*}
\d C^*_j(u)&=\d B_j(u)+\sum\limits_{k:k\neq j} b_{jk}(u)\d \Lambda^*_{jk}(u),  \ C^*_j(0)=0,\\
\d \overline{C}_j(u)&=\d B_j(u)+\sum\limits_{k:k\neq j} b_{jk}(u)\d \overline{\Lambda}_{jk}(u),  \ \overline{C}_j(0)=0.
\end{align*}
Note that $C^*$ and $\overline{C}$ are column vectors.
The vectorial process $I=(I_j)_j$ shall combine all state processes as a row vector.
We further define $$W(u)\coloneqq -H((\Phi^*, \Lambda^*)+(\overline{\Phi}-\Phi^*,\overline{\Lambda}-\Lambda^*)^u).$$ Then for $u\in (0,t]$, we get
\begin{align*}
W(u)&=-\int_{[0,u]} \frac{1}{\overline{\kappa}(s)} I(0)\overline{p}(0,s-) \d \overline{C}(s) - \frac{1}{\overline{\kappa}(u)} \int_{(u,T]} \frac{\kappa^*(u)}{\kappa^*(s)}  I(0) \overline{p}(0,u)p^*(u,s-)\d C^*(s)\\
&=-\int_{[0,u]} \frac{1}{\overline{\kappa}(s)} I(0)\overline{p}(0,s-) \d \overline{C}(s) - \frac{\kappa^*(u)}{\overline{\kappa}(u)} I(0) \overline{p}(0,u)q^*(0,u)\, Y(u) ,
\end{align*}
for $Y(u)= \int_{(u,T]} \frac{1}{\kappa^*(s)}  p^*(0,s-)\d C^*(s)$. Analogously to 
 $\Lambda^*_M$, let $\overline{\Lambda}_M $ denote the matrix-valued process $\overline{\Lambda}_M = (\overline{\Lambda}_{jk})_{jk}$ with $\overline{\Lambda}_{jj}\coloneqq -\sum_{k:k\neq j} \overline{\Lambda}_{jk}$.
By applying It\^{o}'s formula and using the assumption that $(\Phi^*,\overline{\Phi})$ and $(\Lambda^*,\overline{\Lambda},(B_j)_j)$ have no common jumps, we can show that
\begin{align*}
\d W(u)
&= - \frac{1}{\overline{\kappa}(u)} I(0)\overline{p}(0,u-) \d (\overline{C}-C^*)(u) \\
&\ \ \ \ -I(0)\frac{\kappa^*(u-)}{\overline{\kappa}(u-)} \overline{p}(0,u-)q^*(0,u-) Y(u-) \d (\Phi^*-\widetilde{\overline{\Phi}}-[\Phi^*,\widetilde{\overline{\Phi}}])(u)\\
&\ \ \ \ -I(0)\frac{\kappa^*(u-)}{\overline{\kappa}(u-)} \overline{p}(0,u-)\d (\overline{\Lambda}_M-\Lambda^*_M)(u)\, q^*(0,u)  Y(u)\\
&= - \frac{1}{\overline{\kappa}(u)} I(0)\overline{p}(0,u-) \d (\overline{C}-C^*)(u) \\
&\ \ \ \ -\frac{1}{\overline{\kappa}(u-)} I(0)\overline{p}(0,u-) \left(\int_{[u,T]}  \frac{\kappa^*(u-)}{\kappa^*(s)}  p^*(u-,s-)\d C^*(s)\right)\d (\Phi^*-\widetilde{\overline{\Phi}}-[\Phi^*,\widetilde{\overline{\Phi}}])(u)\\
&\ \ \ \ -\frac{1}{\overline{\kappa}(u)}I(0) \overline{p}(0,u-)\d (\overline{\Lambda}_M-\Lambda^*_M)(u) \left(\int_{(u,T]}  \frac{\kappa^*(u)}{\kappa^*(s)}  p^*(u,s-)\d C^*(s)\right)
\end{align*}
where we used Lemma \ref{itoinverse2} to get
\begin{align*}
&\d \left( \frac{\kappa^*(u)}{\overline{\kappa}(u)} \overline{p}(0,u)q^*(0,u)\right)\\
&= \frac{\kappa^*(u-)}{\overline{\kappa}(u-)} \d\left(\overline{p}(0,u)q^*(0,u)\right)+\overline{p}(0,u-)q^*(0,u-)\d \left(\frac{\kappa^*(u)}{\overline{\kappa}(u)} \right)+\d \left[\frac{\kappa^*}{\overline{\kappa}}, \overline{p}(0,\cdot)q^*(0,\cdot) \right](t)\\
&=\frac{\kappa^*(u-)}{\overline{\kappa}(u-)} \overline{p}(0,u-)\d (\overline{\Lambda}_M-\Lambda^*_M)(u)\, q^*(0,u)+\frac{\kappa^*(u-)}{\overline{\kappa}(u-)} \overline{p}(0,u-)q^*(0,u-)\d (\Phi^* - \widetilde{\overline{\Phi}}-[\Phi^*,\widetilde{\overline{\Phi}}])(u)
\end{align*}
with $\widetilde{\overline{\Phi}}(u)=\overline{\Phi}(u)-[\overline{\Phi},\overline{\Phi}]^c(u)-\sum_{0<s\leq u} (1+\Delta \overline{\Phi}(s))^{-1} (\Delta \overline{\Phi}(s))^2$.
Component-wise evaluation and integration on $(0,t]$ gives us the assertion.
\end{proof}

\begin{proof}[Proof of Theorem \ref{Theorem:ISU1}]
Let $J_{\Phi}\subseteq \mathcal{Z}$ and $J_u, J_s \subseteq \mathcal{A}\coloneqq \{(j,k)\in \mathcal{Z}^2:j\neq k\}$. For $s\leq t$, we define
\begin{align*}
X_{\Phi,J_\Phi,j}^{s,t}\coloneqq \begin{cases} X_{\Phi,j}^{s},\ j\notin J_{\Phi}, \\ X_{\Phi,j}^{t},\ j\in J_{\Phi}, \end{cases}
\end{align*}
as well as
\begin{align*}
X_{u,J_u,jk}^{s,t}\coloneqq \begin{cases} X_{u,jk}^{s},\ (j,k)\notin J_u, \\ X_{u,jk}^{t},\ (j,k)\in J_u, \end{cases}
X_{s,J_s,jk}^{s,t}\coloneqq \begin{cases} X_{s,jk}^{s},\ (j,k)\notin J_s, \\ X_{s,jk}^{t},\ (j,k)\in J_s. \end{cases}
\end{align*}
We further set $$X_{\Phi,J_{\Phi}}\coloneqq \sum_{j} X^{0,T}_{\Phi,J_{\Phi},j},\; X_{u,J_u}\coloneqq \big( X^{0,T}_{u,J_u,jk}\big)_{jk},\; X_{s,J_s}\coloneqq \big( X^{0,T}_{s,J_s,jk}\big)_{jk},$$
 where $X^{0,T}_{u,J_u,jj}=-\sum_{k:k\neq j} X^{0,T}_{u,J_u,jk}$ and $X^{0,T}_{s,J_s,jj}=-\sum_{k:k\neq j} X^{0,T}_{s,J_s,jk}$.
Let $\Phi^{J_\Phi}\coloneqq X_{\Phi,J_{\Phi}}+\Phi^*$ and let $\kappa^{J_\Phi}$ denote the solution of $\d \kappa^{J_\Phi}(t)= \kappa^{J_\Phi}(t-)\d \Phi^{J_\Phi}(t)$ with $\kappa^{J_\Phi}(0)=1$. Similarly, for $J=(J_u,J_s)$ let $\Lambda^{J}\coloneqq X_{u,J_u}+X_{s,J_s}+\Lambda_M^*$ and let $p^J=(p_{jk})_{j,k}$ denote the solution of $p^J(s,\d t)=p^J(s,t-)\d \Lambda^J(t)$ with $p^J(s,s)$ being the identity matrix.\par
Let $t\in [0,T]$ and let $(\mathcal{T}_n(t))_n$ be a sequence of partitions of $[0,t]$. For a simpler notation, we only write $t_k$ instead of $t_k^n$ for the grid points in $\mathcal{T}_n$. Throughout the proof, let $\alpha_n(s)$ be the left point of $s$ in $\mathcal{T}_n(t)$, i.e. $\alpha_n(s)\coloneqq t_k$ if $s\in (t_k,t_{k+1}]$. For notational convenience, we write
$$ \varrho_{J_{\Phi},J_u,J_s}^{t_k,t_{k+1}}=\varrho((X_{\Phi,J_{\Phi},j}^{t_k, t_{k+1}})_j, (X_{u,J_u,jk}^{t_k,t_{k+1}})_{jk:j\neq k},(X_{s,J_s,jk}^{t_k,t_{k+1}})_{jk:j\neq k}).$$
It is sufficient to show that
\begin{itemize}
\item[i)] $ \plim\limits_{n\to \infty} \sum_{t_k,t_{k+1}\in \mathcal{T}_n(t)} \left(\varrho_{J_{\Phi}\cup \{j_0\},J_u,J_s}^{t_k,t_{k+1}}-\varrho_{J_{\Phi},J_u,J_s}^{t_k,t_{k+1}}\right)= D_{\Phi,j_0}(t)$, $j_0\in \mathcal{Z}\setminus J_{\Phi}$,
\item[ii)] $\plim\limits_{n\to \infty} \sum_{t_k,t_{k+1}\in \mathcal{T}_n(t)} \left(\varrho_{J_{\Phi},J_u\cup {\{(j_0,k_0)\}}, J_s}^{t_k,t_{k+1}}-\varrho_{J_\Phi,J_u, J_s}^{t_k,t_{k+1}}\right)=D_{u,j_0k_0}(t)$, $(j_0,k_0)\in \mathcal{A}\setminus J_{u}$,
\item[iii)] $\plim\limits_{n\to \infty} \sum_{t_k,t_{k+1}\in \mathcal{T}_n(t)} \left(\varrho_{J_{\Phi},J_u, J_s\cup {\{(j_0,k_0)\}}}^{t_k,t_{k+1}}-\varrho_{J_{\Phi},J_u, J_s}^{t_k,t_{k+1}}\right)= D_{s,j_0k_0}(t)$, $(j_0,k_0)\in \mathcal{A}\setminus J_{s}$.
\end{itemize} We prove the convergences consecutively.

\begin{itemize}
\item[i)] Let $\overline{J}_{\Phi}=J_{\Phi}\cup \{j_0\}$, $j_0\in \mathcal{Z}\setminus J_{\Phi}$ and let $$\Delta(u,s)=\frac{\kappa^{\overline{J}_{\Phi}}(u)}{\kappa^{\overline{J}_{\Phi}}(s)}-\frac{\kappa^{J_{\Phi}}(u)}{\kappa^{J_{\Phi}}(s)},\; u\leq s.$$ We define stochastic processes
\begin{align*}
\xi_{\Phi,j_0,n}(s)&=\frac{1}{\kappa(\alpha_n(s))}\frac{\kappa^{\overline{J}_{\Phi}}(\alpha_n(s))}{\kappa^{\overline{J}_{\Phi}}(s-)} \sum\limits_{g} I_g(t_k)\sum\limits_{j} p^{J}_{gj}(\alpha_n(s),s-)V_j^*(s-)I_{j_0}(s-),\\
\xi_{\Phi,j,n}(s)&= \frac{\Delta(\alpha_n(s),s-)}{\kappa(\alpha_n(s))} \sum\limits_{g} I_g(\alpha_n(s))\sum\limits_{j} p^{J}_{gj}(\alpha_n(s),s-)V_j^*(s-)I_{j}(s-), \ j\in J_{\Phi},\\
\xi_{us,jk,n}(s)&=-\sum\limits_{g} I_g(\alpha_n(s)) \frac{\Delta(\alpha_n(s),s)}{\kappa(\alpha_n(s))} p^{J}_{gj}(\alpha_n(s),s-)R_{jk}^*(s),\ (j,k)\in J_u\cup J_s,
\end{align*}
where $s\in [0,t]$. With Lemma \ref{lemmasurplusdecomp}, we have
\begin{align*}
&\sum_{t_k,t_{k+1}\in \mathcal{T}_n(t)} (\varrho_{J_{\Phi}\cup \{j_0\},J_u,J_s}^{t_k,t_{k+1}}-\varrho_{J_{\Phi},J_u,J_s}^{t_k,t_{k+1}})\\
&=\sum_{t_k,t_{k+1}\in \mathcal{T}_n(t)} (\varrho_{J_{\Phi}\cup \{j_0\},J_u,J_s}^{t_k,t_{k+1}}-\varrho_{\O,\O,\O}^{t_k,t_{k+1}}-(\varrho_{J_{\Phi},J_u,J_s}^{t_k,t_{k+1}}-\varrho_{\O,\O,\O}^{t_k,t_{k+1}}))\\
&= \sum\limits_{j\in \overline{J}_\Phi}\int_{(0,t]}\xi_{\Phi,j,n}(s)\d (\widetilde{\Phi}-\Phi^*)(s)+\sum\limits_{(j,k)\in J_u} \int_{(0,t]}\xi_{us,jk,n}(s)\d (N_{jk}-\Lambda_{jk})(s) \\
&\quad \ +\sum\limits_{(j,k)\in J_s} \int_{(0,t]}\xi_{us,jk,n}(s)\d (\Lambda_{jk}-\Lambda^*_{jk})(s)
\end{align*}
where $\widetilde{\Phi}(s)=\Phi(s)-[\Phi,\Phi]^c(s)-\sum_{0<u\leq s} (1+\Delta \Phi^1(u))^{-1} (\Delta \Phi(u))^2$. Here, we used that
\begin{align*}
\d (\widetilde{\Phi^{J_\Phi}}-\Phi^{*}+[\widetilde{\Phi^{J_\Phi}},\Phi^*])(s)=\sum\limits_{j\in J_\Phi}I_j(s-)\d (\widetilde{\Phi}-\Phi^*)(s).
\end{align*}
Since for every $s\in [0,t]$ we almost surely have
\begin{align*}
\lim\limits_{n\to\infty} \xi_{\Phi,j_0,n}(s)&=\frac{1}{\kappa(s-)}I_{j_0}(s-)V_{j_0}^*(s-),\\
\lim\limits_{n\to\infty} \xi_{\Phi,j,n}(s)&=0,  \; j \in J_{\Phi},\\
\lim\limits_{n\to\infty} \xi_{us,jk,n}(s)&= I_j(s-) \frac{\Delta(s-,s)}{\kappa(s-)}R_{jk}^*(s),\; (j,k)\in J_u\cup J_s,
\end{align*}
and since $\Delta(s-,s)\d (N_{jk}-\Lambda_{jk})(s)=\Delta(s-,s)\d (\Lambda_{jk}-\Lambda^*_{jk})(s)=0$ almost surely,
the Dominated Convergence Theorem for stochastic integrals (cf. Protter, 2005, Chapter IV, Theorem 32) yields
\begin{align*}
\plim\limits_{n\to \infty} \sum_{t_k,t_{k+1}\in \mathcal{T}_n(t)} \left(\varrho_{J_{\Phi}\cup \{j_0\},J_u,J_s}^{t_k,t_{k+1}}-\varrho_{J_{\Phi},J_u,J_s}^{t_k,t_{k+1}}\right)=D_{\Phi,j_0}(t).
\end{align*}

\item[ii)]
Let $\overline{J}=(J_u\cup \{j_0,k_0\},J_s)$, $(j_0,k_0)\in \mathcal{A}\setminus  J_{u}$ and let
$$\Delta_{jk}(u,s)\coloneqq p^{\overline{J}}_{jk}(u,s)-p^{J}_{jk}(u,s),\; u\leq s .$$
We define stochastic processes
\begin{align*}
\xi_{\Phi,j,n}(s)&=\frac{1}{\kappa(\alpha_n(s))}\frac{\kappa^{J_{\Phi}}(\alpha_n(s))}{\kappa^{J_{\Phi}}(s-)} \sum\limits_{g} I_g(\alpha_n(s))\sum\limits_{j} \Delta_{gj}(\alpha_n(s),s-)V_j^*(s-)I_{j}(s-),\; j\in J_\Phi,\\
\xi_{us,jk,n}(s)&= - \sum\limits_{g} I_g(\alpha_n(s)) \frac{1}{\kappa(\alpha_n(s))} \frac{\kappa^{J_\Phi}(\alpha_n(s))}{\kappa^{J_\Phi}(s)} \Delta_{gj}(\alpha_n(s),s-)R_{jk}^*(s),\; (j,k)\in J_u\cup J_s,\\
\xi_{u,j_0k_0,n}(s)&= - \sum\limits_{g} I_g(\alpha_n(s)) \frac{1}{\kappa(\alpha_n(s))} \frac{\kappa^{J_\Phi}(\alpha_n(s))}{\kappa^{J_\Phi}(s)} p^{\overline{J}}_{gj_0}(\alpha_n(s),s-)R_{j_0k_0}^*(s),
\end{align*}
where $s\in [0,t]$. Again with Lemma \ref{lemmasurplusdecomp}, we have
\begin{align*}
&\sum\limits_{t_k,t_{k+1}\in \mathcal{T}_n(t)} (\varrho_{J_{\Phi},J_u\cup {\{(j_0,k_0)\}}, J_s}^{t_k,t_{k+1}}-\varrho_{J_\Phi,J_u, J_s}^{t_k,t_{k+1}}) \\
&=\sum\limits_{t_k,t_{k+1}\in \mathcal{T}_n(t)} (\varrho_{J_{\Phi},J_u\cup {\{(j_0,k_0)\}}, J_s}^{t_k,t_{k+1}}-\varrho_{\O,\O,\O}^{t_k,t_{k+1}}-(\varrho_{J_{\Phi},J_u,J_s}^{t_k,t_{k+1}}-\varrho_{\O,\O,\O}^{t_k,t_{k+1}}))\\
&= \sum\limits_{j\in J_\Phi}\int_{(0,t]}\xi_{\Phi,j,n}(s)\d (\widetilde{\Phi}-\Phi^*)(s)+\sum\limits_{(j,k)\in J_u} \int_{(0,t]}\xi_{us,jk,n}(s)\d (N_{jk}-\Lambda_{jk})(s) \\
&\quad \ +\int_{(0,t]}\xi_{u,j_0k_0,n}(s)\d (N_{j_0k_0}-\Lambda_{j_0k_0})(s) +\sum\limits_{(j,k)\in J_s} \int_{(0,t]}\xi_{us,jk,n}(s)\d (\Lambda_{jk}-\Lambda^*_{jk})(s)
\end{align*}
Since for every $s\in [0,t]$ we almost surely have
\begin{align*}
\lim\limits_{n\to \infty} \xi_{u,j_0k_0,n}(s)&=-\frac{1}{\kappa(s-)}  I_{j_0}(s-)  \frac{\kappa^{J_\Phi}(s-)}{\kappa^{J_\Phi}(s)} R_{j_0k_0}^*(s)  \\
&= -\frac{1}{\kappa(s)}  I_{j_0}(s-)  \frac{1+\Delta \Phi(s)}{1+\Delta \Phi^{J_\Phi}(s)} R_{j_0k_0}^*(s),\\
\lim\limits_{n\to\infty} \xi_{\Phi,j,n}(s)&=\lim\limits_{n\to \infty} \xi_{us,jk,n}(s)=0,\; j\in J_\Phi,\; (j,k)\in J_u\cup J_s,
\end{align*}
and since $\frac{1+\Delta \Phi(s)}{1+\Delta \Phi^{J_\Phi}(s)}\d(N_{j_0k_0}-\Lambda_{j_0k_0})(s)=\d(N_{j_0k_0}-\Lambda_{j_0k_0})(s)$ almost surely, the Dominated Convergence Theorem for stochastic integrals (cf. Protter, 2005, Chapter IV, Theorem 32) yields
\begin{align*}
\plim\limits_{n\to\infty}  \sum\limits_{t_k,t_{k+1}\in \mathcal{T}_n(t)} \left(\varrho_{J_{\Phi},J_u\cup {\{(j_0,k_0)\}}, J_s}^{t_k,t_{k+1}}-\varrho_{J_\Phi,J_u, J_s}^{t_k,t_{k+1}}\right)=D_{u,j_0k_0}(t).
\end{align*}

\item[iii)]
Let $\overline{J}=(J_u,J_s\cup \{j_0,k_0\})$, $(j_0,k_0)\in \mathcal{A}\setminus  J_{s}$ and let
$$\Delta_{jk}(u,s)\coloneqq p^{\overline{J}}_{jk}(u,s)-p^{J}_{jk}(u,s),\; u\leq s.$$
We define stochastic processes
\begin{align*}
	\xi_{\Phi,j,n}(s)&=\frac{1}{\kappa(\alpha_n(s))}\frac{\kappa^{J_{\Phi}}(\alpha_n(s))}{\kappa^{J_{\Phi}}(s-)} \sum\limits_{g} I_g(\alpha_n(s))\sum\limits_{j} \Delta_{gj}(\alpha_n(s),s-)V_j^*(s-)I_{j}(s-),\; j\in J_\Phi,\\
	\xi_{us,jk,n}(s)&= - \sum\limits_{g} I_g(\alpha_n(s)) \frac{1}{\kappa(\alpha_n(s))} \frac{\kappa^{J_\Phi}(\alpha_n(s))}{\kappa^{J_\Phi}(s)} \Delta_{gj}(\alpha_n(s),s-)R_{jk}^*(s),\; (j,k)\in J_u\cup J_s,\\
	\xi_{s,j_0k_0,n}(s)&= - \sum\limits_{g} I_g(\alpha_n(s)) \frac{1}{\kappa(\alpha_n(s))} \frac{\kappa^{J_\Phi}(\alpha_n(s))}{\kappa^{J_\Phi}(s)} p^{\overline{J}}_{gj_0}(\alpha_n(s),s-)R_{j_0k_0}^*(s),
\end{align*}
where $s\in [0,t]$. Again with Lemma \ref{lemmasurplusdecomp}, we have
\begin{align*}
	&\sum\limits_{t_k,t_{k+1}\in \mathcal{T}_n(t)} (\varrho_{J_{\Phi},J_u, J_s\cup {\{(j_0,k_0)\}}}^{t_k,t_{k+1}}-\varrho_{J_\Phi,J_u, J_s}^{t_k,t_{k+1}}) \\
	&=\sum\limits_{t_k,t_{k+1}\in \mathcal{T}_n(t)} (\varrho_{J_{\Phi},J_u, J_s\cup {\{(j_0,k_0)\}}}^{t_k,t_{k+1}}-\varrho_{\O,\O,\O}^{t_k,t_{k+1}}-(\varrho_{J_{\Phi},J_u,J_s}^{t_k,t_{k+1}}-\varrho_{\O,\O,\O}^{t_k,t_{k+1}}))\\
	&= \sum\limits_{j\in J_\Phi}\int_{(0,t]}\xi_{\Phi,j,n}(s)\d (\widetilde{\Phi}-\Phi^*)(s)+\sum\limits_{(j,k)\in J_u} \int_{(0,t]}\xi_{us,jk,n}(s)\d (N_{jk}-\Lambda_{jk})(s) \\
	&\quad \ +\sum\limits_{(j,k)\in J_s} \int_{(0,t]}\xi_{us,jk,n}(s)\d (\Lambda_{jk}-\Lambda^*_{jk})(s)+\int_{(0,t]}\xi_{s,j_0k_0,n}(s)\d (\Lambda_{j_0k_0}-\Lambda^*_{j_0k_0})(s).
\end{align*}
Since for every $s\in [0,t]$ we almost surely have
\begin{align*}
	\lim\limits_{n\to \infty} \xi_{s,j_0k_0,n}(s)&=-\frac{1}{\kappa(s-)}  I_{j_0}(s-)  \frac{\kappa^{J_\Phi}(s-)}{\kappa^{J_\Phi}(s)} R_{j_0k_0}^*(s)  \\
	&= -\frac{1}{\kappa(s)}  I_{j_0}(s-)  \frac{1+\Delta \Phi(s)}{1+\Delta \Phi^{J_\Phi}(s)} R_{j_0k_0}^*(s),\\
	\lim\limits_{n\to\infty} \xi_{\Phi,j,n}(s)&=\lim\limits_{n\to \infty} \xi_{us,jk,n}(s)=0, \; j\in J_\Phi,\; (j,k)\in J_u\cup J_s,
\end{align*}
and since $\frac{1+\Delta \Phi(s)}{1+\Delta \Phi^{J_\Phi}(s)}\d(\Lambda_{j_0k_0}-\Lambda^*_{j_0k_0})(s)=\d(\Lambda_{j_0k_0}-\Lambda^*_{j_0k_0})(s)$ almost surely, the Dominated Convergence Theorem for stochastic integrals (cf. Protter, 2005, Chapter IV, Theorem 32) yields
\begin{align*}
	\plim\limits_{n\to\infty}  \sum\limits_{t_k,t_{k+1}\in \mathcal{T}_n(t)} \left(\varrho_{J_{\Phi},J_u, J_s\cup {\{(j_0,k_0)\}}}^{t_k,t_{k+1}}-\varrho_{J_\Phi,J_u, J_s}^{t_k,t_{k+1}}\right)=D_{s,j_0k_0}(t).
\end{align*}
\end{itemize}
\end{proof}

\begin{proof}[Proof of Theorem \ref{Theorem:ISU2}] The model framework, introduced in chapter 3, entails that the integrability assumption in Lemma \ref{ISU:ExpectationConsistency} for the SU decomposition (cf. proof of Theorem \ref{Theorem:ISU1}) is satisfied. Thus, applying Lemma \ref{ISU:ExpectationConsistency} with $\mathcal{G}=\sigma(\Phi,\Lambda)$ to the ISU decomposition in Theorem \ref{Theorem:ISU1} and using the martingale property of $\d N_{jk}-I_j(t-) \d \Lambda_{jk}(t)$ with respect to the natural completed filtration of the random vector $(Z^t,\Phi,\Lambda)_{t\geq 0}$ give the desired result.
\end{proof}

\begin{proof}[Proof of the SU decomposition in the time-discrete case (Example \ref{Example:RinGermany})] In the setting of Example \ref{Example:RinGermany}, the functional $H$ in \eqref{functionaldef} takes the form
\begin{align*}
H(\overline{\Phi},\overline{\Lambda}_{ad},\overline{\Lambda}_{as})=\sum\limits_{l=0}^T e^{-\int_0^l \overline{\phi}(u)\d u} \phantom{.}_{l}\overline{p}_{x} b_l+\sum\limits_{l=1}^T e^{-\int_0^l \overline{\phi}(u)\d u} \phantom{.}_{l-1}\overline{p}_{x}(\overline{q}_{x+l} d_l+\overline{r}_{x+l} s_l)
\end{align*}
Furthermore, for the risk basis $X=(\Phi-\Phi^*,\Lambda_{ad}-\Lambda_{ad}^*,\Lambda_{as}-\Lambda_{as}^*)$, the mapping $\varrho$ is given by $\varrho(X^t)=-H((\Phi^*,\Lambda_{ad}^*,\Lambda_{as}^*)+X^t)$.
We prove the three equations consecutively.
\begin{itemize}
\item[i)] We have that \begin{align*}
 &U(k+1,k,k)-U(k,k,k)\\
 &=\varrho(\Phi^{k+1},\Lambda_{ad}^k,\Lambda_{as}^k)-\varrho(\Phi^{k},\Lambda_{ad}^k,\Lambda_{as}^k) \\
&=e^{-\int_0^{k+1} \phi(u)\, \d u } \phantom{.}_{k}p_{x}((1+i_k)V_a^*(k)-p^*_{x+k}\,b_{k+1}-q^*_{x+k}\,d_{k+1}-r^*_{x+k}\,s_{k+1}-p_{x+k}^*\,V_a^*(k+1))
\end{align*}
Since
\begin{align*}
& -p^*_{x+k}\,b_{k+1}-q^*_{x+k}\,d_{k+1} -r^*_{x+k}\,s_{k+1}-p_{x+k}^*\,V_a^*(k+1)= -(1+i_k^*)V_a^*(k),
\end{align*}
we get the first equation.
\item[ii)] With similar calculations as in i) we get
\begin{align*}
&U(k+1,k+1,k)-U(k,k,k)\\
&=\varrho(\Phi^{k+1},\Lambda_{ad}^{k+1},\Lambda_{as}^{k})-\varrho(\Phi^{k},\Lambda_{ad}^{k},\Lambda_{as}^{k}) \\
&=-e^{-\int_0^{k+1} \phi(u)\, \d u } \phantom{.}_{k}p_{x}((1-q_{x+k}-r^*_{x+k})b_{k+1}+ q_{x+k}\,d_{k+1}
+r^*_{x+k}\,s_{k+1})\\
&\ \ \ \ -e^{-\int_0^{k+1} \phi(u)\, \d u } \phantom{.}_{k}p_{x}(1-q_{x+k}-r^*_{x+k})V_a^*(k+1)+ e^{-\int_0^{k} \phi(u)\, \d u }\phantom{.}_{k}p_{x}\,V_a^*(k) \\
&= e^{-\int_0^{k+1} \phi(u)\, \d u } \phantom{.}_{k}p_{x}(V_a(k+1-)-d_{k+1})(q_{x+k}-q^*_{x+k})  +  e^{-\int_0^{k+1} \phi(u)\, \d u } \phantom{.}_kp_{x} \,V^*_a(k)\big(i_k-i_k^*\big)
\end{align*}
The second equality follows then by substracting $U(k+1,k,k)-U(k,k,k)$ (cf. i)) from $U(k+1,k+1,k)-U(k,k,k)$.
\item[iii)] For the third equality, we can use the results from i) and ii) to obtain
\begin{align*}
&U(k+1,k+1,k+1)-U(k+1,k+1,k)\\
&=R(k+1)-R(k)-(U(k+1,k+1,k)-U(k+1,k,k))-(U(k+1,k,k)-U(k,k,k))\\
&=e^{-\int_0^{k+1} \phi(u)\, \d u } \phantom{.}_kp_{x}  \big(V^*_a(k+1-)-s_{k+1} \big)  \,\big(r_{x+k} -r^*_{x+k}\big).
\end{align*}
\end{itemize}
\end{proof}

\subsection{Technical results}
 Analogously to
 $\Lambda^*_M$, let $\overline{\Lambda}_M $ denote the matrix-valued process $\overline{\Lambda}_M = (\overline{\Lambda}_{jk})_{jk}$ with $\overline{\Lambda}_{jj}\coloneqq -\sum_{k:k\neq j} \overline{\Lambda}_{jk}$, and define $\Lambda'_M $ likewise.
 
\begin{lemma}\label{itoinverse}
Let $(\overline{\Phi},\overline{\Lambda})$ be a valuation basis.
\begin{itemize}
\item[a)] Let $\overline{\kappa}$ be the solution of the stochastic differential $\d \overline{\kappa}(t)=\overline{\kappa}(t-)\d \overline{\Phi}(t)$ with $\overline{\kappa}(0)=1$. Then it holds that
    \begin{align*}
\d \left(\frac{1}{\overline{\kappa}(t)}\right)=-\frac{1}{\overline{\kappa}(t-)}\d \widetilde{\overline{\Phi}}(t),
\end{align*}
where $\widetilde{\overline{\Phi}}(t)=\overline{\Phi}(t)-[\overline{\Phi},\overline{\Phi}]^c(t)-\sum_{0<s\leq t} (1+\Delta \overline{\Phi}(s))^{-1} (\Delta \overline{\Phi}(s))^2$.
\item[b)] Let $\overline{p}(s,t)$ be the solution of the matrix-valued stochastic differential equation $\overline{p}(s,\d t)=\overline{p}(s,t-)\d \overline{\Lambda}_M(t)$ with $\overline{p}(s,s)=\mathbb{I}$. Assume that $(\mathbb{I}+\Delta \overline{\Lambda}_M(t))^{-1}$ exists for all $t>0$. Then $\overline{p}(s,t)$ is invertible, and the inverse $\overline{q}(s,t)$ solves the SDE \begin{align*}
\overline{q}(s,\d t)= -(\d G(t))\overline{q}(s,t-)=-(\d \overline{\Lambda}_M(t))\overline{q}(s,t),
\end{align*}
where $G(t)=\overline{\Lambda}_M(t)-\sum_{0<s\leq t} (\Delta\overline{\Lambda}_M(s))^2(I+\Delta \overline{\Lambda}_M(s))^{-1}$.
\end{itemize}
\begin{proof}
\begin{itemize}
\item[a)] Due to the properties of a valuation basis, $\widetilde{\overline{\Phi}}$ is a well-defined semimartingale. Thus, with Theorem V.10.63 of Protter (2005), the assertion follows.
\item[b)] For applying Theorem V.10.63 of Protter (2005) later again, we firstly have to show that $G$ is a well-defined semimartingale. Since $\overline{\Lambda}$ is a càdlàg finite variation process, it suffices to show $(\sum_{0<s\leq t} (\Delta\overline{\Lambda}_M(s))^2(I+\Delta\overline{\Lambda}_M(s))^{-1})_{jk}<\infty$ for all $t>0$ and $j,k$. Let $\Vert\cdot \Vert$, defined by $\Vert A\Vert=n\cdot \max_{j,k} |a_{jk}|$ for a matrix $A=(a_{jk})_{jk}\in \mathbb{R}^{n\times n}$, denote the maximum norm on $\mathbb{R}^{n\times n}$. If $\Vert\Delta \overline{\Lambda}_M(t)\Vert\leq 1/2$, then it holds
$$\Vert (I+\Delta \overline{\Lambda}_M(t))^{-1}\Vert\leq \frac{1}{1-\Vert \Delta \overline{\Lambda}_M(t)\Vert}\leq 2, $$
see for example Werner (2018, Theorem II.1.12). Using this upper bound, the subadditity and the submultiplicity of the norm, we get
\begin{align*}
&\bigg\Vert \sum_{0<s\leq t} (\Delta \overline{\Lambda}_M(s))^2(I+\Delta \overline{\Lambda}_M(s))^{-1}\bigg\Vert \\
&\leq \sum_{\substack{0<s\leq t\\ \Vert \Lambda_M(s)\Vert>1/2}} \bigg\Vert(\Delta \overline{\Lambda}_M(s))^2(I+\Delta \overline{\Lambda}_M(s))^{-1}\bigg\Vert +\sum_{\substack{0<s\leq t\\ \Vert \overline{\Lambda}_M(s)\Vert\leq1/2}}\bigg\Vert (\Delta \overline{\Lambda}_M(s))^2(I+\Delta \overline{\Lambda}_M(s))^{-1}\bigg\Vert \\
&\leq \sum_{\substack{0<s\leq t\\ \Vert \overline{\Lambda}_M(s)\Vert>1/2}} \bigg\Vert(\Delta \overline{\Lambda}_M(s))^2(I+\Delta \overline{\Lambda}_M(s))^{-1}\bigg\Vert +\sum_{\substack{0<s\leq t\\ \Vert \overline{\Lambda}_M(s)\Vert \leq 1/2}}\bigg\Vert \Delta \overline{\Lambda}_M(s)\bigg\Vert.
\end{align*}
The first sum in the latter expression is finite, since $\Vert \Delta\overline{\Lambda}_M (s)\Vert>1/2$ occurs only for finitely many $s\in [0,t]$. For the second term, observe that
\begin{align*}
\sum\limits_{0<s\leq t} \Vert \overline{\Lambda}_M(s)\Vert\leq \sum\limits_{j,k}\sum\limits_{0<s\leq t} |\Delta \overline{\Lambda}_{jk}(s)|<\infty,
\end{align*}
on account of the fact that $\overline{\Lambda}$ is a finite variation process. Thus, $G$ is a well-defined semimartingale.\par
For a matrix-valued semimartingale $Z$, let $\mathcal{E}(Z)$ denote the (matrix-valued) exponential of $Z$ and let $\mathbb{E}^R(Z)$ denote the (matrix-valued) right-stochastic exponential of $Z$ (cf. Chapter V in Protter (2005)). By applying Theorem V.10.63 of Protter (2005), we get
\begin{align*}
\mathcal{E}(F)(t)\mathcal{E}^R(\overline{\Lambda}_M^\top)(t)=\mathbb{I}
\end{align*}
for $F(t)=-\overline{\Lambda}_M^\top(t)+\sum_{0<s\leq t} (I+\Delta \overline{\Lambda}_M^\top(s))^{-1} (\Delta \overline{\Lambda}_M^{\top}(s))^2$. Because of
$\mathcal{E}^R(Z)=\mathcal{E}(Z^\top)^\top$ and $F^\top=-G$, the latter equation is equivalent to
\begin{align*}
\mathcal{E}(\overline{\Lambda}_M)(t)\mathcal{E}^R(-G)(t)=\mathbb{I},
\end{align*}
which proves the first equation of the assertion. In particular, we verified that
$\overline{q}(s,t)-\overline{q}(s,t-) = -(\Delta G(t))\overline{q}(s,t-)$, which implies that
\begin{align*}
  -(\Delta \overline{\Lambda}_M(t))\overline{q}(s,t)& =   -(\Delta \overline{\Lambda}_M(t))\overline{q}(s,t-) + (\Delta \overline{\Lambda}_M(t))(\Delta G(t))\overline{q}(s,t-) \\
  &=  -\Delta\overline{\Lambda}_M(t))(\mathbb{I}-\Delta G(t) )\overline{q}(s,t-) \\
   & =-(\Delta G(t)) \overline{q}(s,t-).
\end{align*}
Thus, the  second  equation of the assertion is also true.
\end{itemize}
\end{proof}
\end{lemma}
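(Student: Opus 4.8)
The plan is to treat the two parts separately, in each case reducing the claim to the standard formula for the inverse of a (matrix-valued) Dol\'eans--Dade exponential, i.e.\ Theorem V.10.63 of Protter (2005).

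For part a) I would first record that $\overline{\kappa}=\mathcal{E}(\overline{\Phi})$ (the Dol\'eans--Dade exponential of $\overline{\Phi}$) is strictly positive: its jumps satisfy $\overline{\kappa}(t)=\overline{\kappa}(t-)(1+\Delta\overline{\Phi}(t))$ with $1+\Delta\overline{\Phi}(t)>0$ by the valuation-basis conditions, while the continuous exponential factor never changes sign; hence $1/\overline{\kappa}$ and $\widetilde{\overline{\Phi}}$ are well-defined semimartingales. From there I see two routes. The hands-on route applies It\^o's formula to $x\mapsto 1/x$ along $\overline{\kappa}$, using $\d\overline{\kappa}=\overline{\kappa}(t-)\d\overline{\Phi}$ and $\d[\overline{\kappa},\overline{\kappa}]^c=\overline{\kappa}(t-)^2\d[\overline{\Phi},\overline{\Phi}]^c$; after inserting $1/\overline{\kappa}(s)-1/\overline{\kappa}(s-)=-\overline{\kappa}(s-)^{-1}\Delta\overline{\Phi}(s)(1+\Delta\overline{\Phi}(s))^{-1}$ and $\Delta\overline{\kappa}(s)/\overline{\kappa}(s-)^2=\overline{\kappa}(s-)^{-1}\Delta\overline{\Phi}(s)$, the jump-compensation sum telescopes to the $\sum_{0<s\le t}(1+\Delta\overline{\Phi}(s))^{-1}(\Delta\overline{\Phi}(s))^2$ term and the three contributions assemble to $-\overline{\kappa}(t-)^{-1}\d\widetilde{\overline{\Phi}}(t)$. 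The quicker route, which also rehearses part b), simply quotes Protter's Theorem V.10.63, which yields $\mathcal{E}(\overline{\Phi})^{-1}=\mathcal{E}(-\widetilde{\overline{\Phi}})$, i.e.\ the asserted equation.

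For part b) the plan is in three steps. First, verify that $G$ is a genuine semimartingale, i.e.\ that $\sum_{0<s\le t}(\Delta\overline{\Lambda}_M(s))^2(\mathbb{I}+\Delta\overline{\Lambda}_M(s))^{-1}$ converges absolutely; since $(\mathbb{I}+\Delta\overline{\Lambda}_M(s))^{-1}$ is only assumed to exist, not to be bounded, I would fix a submultiplicative matrix norm and split the jumps at $\Vert\Delta\overline{\Lambda}_M(s)\Vert=1/2$: the large jumps are finitely many because $\overline{\Lambda}$ has finite variation, and on the small ones the Neumann-series estimate $\Vert(\mathbb{I}+\Delta\overline{\Lambda}_M(s))^{-1}\Vert\le 2$ dominates the summand by a constant times $\Vert\Delta\overline{\Lambda}_M(s)\Vert$, which is summable again by finite variation. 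Second, apply Protter's Theorem V.10.63 to the matrix semimartingale $\overline{\Lambda}_M^\top$ to get $\mathcal{E}(F)\,\mathcal{E}^R(\overline{\Lambda}_M^\top)=\mathbb{I}$ with $F=-\overline{\Lambda}_M^\top+\sum_{0<s\le t}(\mathbb{I}+\Delta\overline{\Lambda}_M^\top(s))^{-1}(\Delta\overline{\Lambda}_M^\top(s))^2$, then transpose, using $\mathcal{E}^R(Z)=\mathcal{E}(Z^\top)^\top$ and $F^\top=-G$, to obtain $\overline{p}(s,t)\,\mathcal{E}^R(-G)(t)=\mathbb{I}$; this identifies the inverse $\overline{q}(s,t)=\mathcal{E}^R(-G)(t)$ and the first SDE $\overline{q}(s,\d t)=-(\d G(t))\overline{q}(s,t-)$. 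Third, pass to the equivalent form $-(\d\overline{\Lambda}_M(t))\overline{q}(s,t)$: since $G$ and $\overline{\Lambda}_M$ have the same continuous part, only the jumps need checking, and there $\overline{q}(s,t)=(\mathbb{I}-\Delta G(t))\overline{q}(s,t-)$ together with the algebraic identity $(\Delta\overline{\Lambda}_M(t))(\mathbb{I}-\Delta G(t))=\Delta G(t)$ suffices; the latter reduces to $\Delta G(t)=(\mathbb{I}+\Delta\overline{\Lambda}_M(t))^{-1}\Delta\overline{\Lambda}_M(t)$, which holds because $\Delta\overline{\Lambda}_M(t)$ commutes with $(\mathbb{I}+\Delta\overline{\Lambda}_M(t))^{-1}$.

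I expect the bookkeeping of left versus right matrix stochastic exponentials in the second step of part b) to be the main obstacle: Protter's inverse formula is stated for one of them, whereas $\overline{p}$ solves a forward (Kolmogorov-type) equation with the increment on the side matching $\mathcal{E}^R$, so one must transpose carefully—this is exactly why $F^\top=-G$ rather than $F=-G$—to land on the stated differential equations. The remaining ingredients are a routine It\^o computation (part a) and the finite-variation summability estimate for the well-definedness of $G$.
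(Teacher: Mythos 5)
Your proposal is correct and follows essentially the same route as the paper: part a) via Protter's Theorem V.10.63 (the paper skips the alternative hand Itô computation), and part b) via the same three steps — the jump-splitting/Neumann-series argument for the well-definedness of $G$, the transposition identity $\mathcal{E}^R(Z)=\mathcal{E}(Z^\top)^\top$ with $F^\top=-G$ to invert the right stochastic exponential, and the jump algebra $\Delta G(t)=\Delta\overline{\Lambda}_M(t)(\mathbb{I}+\Delta\overline{\Lambda}_M(t))^{-1}$ to pass between the two forms of the SDE.
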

\begin{lemma}\label{itoinverse2}
Let $(\Phi',(\Lambda'_{jk})_{jk:j\neq k})$, $(\overline{\Phi},(\overline{\Lambda}_{jk})_{jk:j\neq k})$ be valuation bases.
\begin{itemize}
\item[a)] Let  $\d \kappa'(t)=\kappa'(t-)\d \Phi'(t)$ with $\kappa'(0)=1$ and $\d \overline{\kappa}(t)=\overline{\kappa}(t-)\d \overline{\Phi}(t)$ with $\kappa'(0)=1$.
      Then it holds that
\begin{align*}
\d \left(\frac{\kappa'(t)}{\overline{\kappa}(t)}\right)=\frac{\kappa'(t-)}{\overline{\kappa}(t-)}\Big( \d \Phi'- \d\widetilde{\overline{\Phi}}(t) -  \d [\Phi',\widetilde{\overline{\Phi}}](t) \Big) ,
\end{align*}
where $\widetilde{\overline{\Phi}}(t)=\overline{\Phi}(t)-[\overline{\Phi},\overline{\Phi}]^c(t)-\sum_{0<s\leq t} (1+\Delta \overline{\Phi}(s))^{-1} (\Delta \overline{\Phi}(s))^2$.
\item[b)] Let $p'(s,\d t)=p'(s,t-)\d \Lambda'_M(t)$ with $p'(s,s)=\mathbb{I}$ and $\overline{p}(s,\d t)=\overline{p}(s,t-)\d \overline{\Lambda}_M(t)$ with $\overline{p}(s,s)=\mathbb{I}$.  Suppose that $\overline{p}(s,t)$ is invertible with inverse $\overline{q}(s,t)$. Then it holds that
\begin{align*}
\d_t \left(p'(s,t)\overline{q}(s,t) \right)=p'(s,t-)\d (\Lambda'_M-\overline{\Lambda}_M)(t) \overline{q}(s,t).
\end{align*}
\end{itemize}
\end{lemma}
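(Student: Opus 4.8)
The plan is to derive both identities from the integration-by-parts formula for semimartingales, combined with the derivative formulas of Lemma \ref{itoinverse}.

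For part a), I would write $\kappa'/\overline{\kappa}=\kappa'\cdot(1/\overline{\kappa})$ and apply the scalar product rule, which produces the three terms $\tfrac{1}{\overline{\kappa}(t-)}\,\d\kappa'(t)$, $\kappa'(t-)\,\d\bigl(\tfrac{1}{\overline{\kappa}(t)}\bigr)$ and $\d[\kappa',1/\overline{\kappa}](t)$. The first equals $\tfrac{\kappa'(t-)}{\overline{\kappa}(t-)}\,\d\Phi'(t)$ by the definition of $\kappa'$, and the second equals $-\tfrac{\kappa'(t-)}{\overline{\kappa}(t-)}\,\d\widetilde{\overline{\Phi}}(t)$ by Lemma \ref{itoinverse}a). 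Since $\kappa'$ and $1/\overline{\kappa}$ are stochastic integrals against $\Phi'$ and $\widetilde{\overline{\Phi}}$ with integrands $\kappa'(\cdot-)$ and $-1/\overline{\kappa}(\cdot-)$, bilinearity of the covariation gives $\d[\kappa',1/\overline{\kappa}](t)=-\tfrac{\kappa'(t-)}{\overline{\kappa}(t-)}\,\d[\Phi',\widetilde{\overline{\Phi}}](t)$; collecting the three terms yields the assertion. This part is entirely routine.

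For part b) the same strategy applies in the matrix-valued setting. I would apply the matrix product rule to $p'(s,\cdot)\,\overline{q}(s,\cdot)$, substitute $\d_t p'(s,t)=p'(s,t-)\,\d\Lambda'_M(t)$, and invoke the SDE $\overline{q}(s,\d t)=-(\d G(t))\overline{q}(s,t-)$ from Lemma \ref{itoinverse}b), where $G(t)=\overline{\Lambda}_M(t)-\sum_{0<u\leq t}(\Delta\overline{\Lambda}_M(u))^2(\mathbb{I}+\Delta\overline{\Lambda}_M(u))^{-1}$. Because $\Lambda'_M$ and $G$ have finite variation, the covariation term is a pure jump term, namely $-p'(s,t-)\,\Delta\Lambda'_M(t)\,\Delta G(t)\,\overline{q}(s,t-)$, so the product rule collapses to
\[
\d_t\bigl(p'(s,t)\overline{q}(s,t)\bigr)=p'(s,t-)\bigl(\d\Lambda'_M(t)-\d G(t)-\Delta\Lambda'_M(t)\,\Delta G(t)\bigr)\overline{q}(s,t-).
\]

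It then remains to identify this with the asserted right-hand side $p'(s,t-)\,\d(\Lambda'_M-\overline{\Lambda}_M)(t)\,\overline{q}(s,t)$. Using $\overline{q}(s,t)=(\mathbb{I}-\Delta G(t))\,\overline{q}(s,t-)$, the continuous parts match immediately since $G$ and $\overline{\Lambda}_M$ have the same continuous part. The one substantive step is the jump part, where I would have to verify the matrix identity $\Delta\Lambda'_M-\Delta G-\Delta\Lambda'_M\,\Delta G=(\Delta\Lambda'_M-\Delta\overline{\Lambda}_M)(\mathbb{I}-\Delta G)$; after expanding the right-hand side this reduces precisely to the relation $(\Delta\overline{\Lambda}_M)(\mathbb{I}-\Delta G)=\Delta G$, which is exactly what is established inside the proof of Lemma \ref{itoinverse}b). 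I expect this jump reconciliation --- together with bookkeeping of the matrix-multiplication order and of which factors carry left limits --- to be the only non-mechanical point; the alternative form $\overline{q}(s,\d t)=-(\d\overline{\Lambda}_M(t))\,\overline{q}(s,t)$ of that lemma provides a handy cross-check of the final expression.
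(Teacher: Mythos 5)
Your proof is correct and follows essentially the same route as the paper: integration by parts together with Lemma \ref{itoinverse} in both parts. The only difference is in part b), where the paper substitutes the second form $\overline{q}(s,\d t)=-(\d\overline{\Lambda}_M(t))\overline{q}(s,t)$ of the inverse's SDE directly (what you call the `cross-check'), so that the covariation term merely upgrades $\overline{q}(s,t-)$ to $\overline{q}(s,t)$ and the jump reconciliation $(\Delta\overline{\Lambda}_M)(\mathbb{I}-\Delta G)=\Delta G$ that you carry out explicitly never needs to appear.
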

\begin{proof}
\begin{itemize}
\item[a)]  Integration by parts (Protter, 2005, Corollary II.6.2) and Lemma \ref{itoinverse}a) yield\begin{align*}
\d \left(\frac{\kappa'(t)}{\overline{\kappa}(t)}\right)&=\kappa'(t-)\d\left(\frac{1}{\overline{\kappa}(t)}\right)+\frac{1}{\overline{\kappa}(t-)}\d\kappa'(t)+\d \left[\frac{1}{\overline{\kappa}},\kappa'\right](t)\\
&=-\frac{\kappa'(t-)}{\overline{\kappa}(t-)}\d\widetilde{\overline{\Phi}}(t)+\frac{\kappa'(t-)}{\overline{\kappa}(t-)}\d \Phi'(t)-\frac{\kappa'(t-)}{\overline{\kappa}(t-)}\d [\widetilde{\overline{\Phi}},\Phi'](t) .
\end{align*}
\item[b)] Integration by parts (Protter, 2005, Corollary II.6.2) and Lemma \ref{itoinverse}b) yield
\begin{align*}
\d_t \left(p'(s,t)\overline{q}(s,t) \right)&=p'(s,t-)\overline{q}(s,\d t)+p'(s,\d t)\overline{q}(s,t-)+\d \left[p'(s,\cdot),\overline{q}(s,\cdot) \right](t) \\
&= - p'(s,t-)(\d \overline{\Lambda}_M(t)) \overline{q}(s,t)+p'(s,t-)(\d\Lambda'_M(t)) \overline{q}(s,t)\\
&=p'(s,t-) \d (\Lambda'_M-\overline{\Lambda}_M)(t)\overline{q}(s,t).
\end{align*}
\end{itemize}
\end{proof}

\end{document}